\newtheorem{theorem}{Theorem}
\newtheorem{lemma}{Lemma}
\newtheorem{pdefinition}{Problem}
\newtheorem{behavior}{Behavior}
\newtheorem{observation}{Observation}
\begin{document}

\setcopyright{acmcopyright}

\conferenceinfo{ICDCN '17}{January 4--7, 2017, Hyderabad, India}


%
\conferenceinfo{ICDCN '17}{January 4--7, 2017, Hyderabad, India}


\title{Fault-Tolerant Gathering of Mobile Robots with \\Weak Multiplicity Detection
}

\numberofauthors{4}
\author{
\alignauthor
Debasish Pattanayak\\
       \affaddr{IIT Guwahati, India}\\
       \email{p.debasish@iitg.ernet.in}
\alignauthor
Kaushik Mondal\\
 \affaddr{IIT Guwahati, India}\\       			
       \email{mondal.k@iitg.ernet.in}
\alignauthor
H. Ramesh\\
        \affaddr{IIT Guwahati, India}\\
       \email{ramesh\_h@iitg.ernet.in}
\and  
\alignauthor Partha Sarathi Mandal\\
      \affaddr{IIT Guwahati, India}\\
       \email{psm@iitg.ernet.in}
}
\date{05 August 2016}
\maketitle

\begin{abstract}
There has been a wide interest in designing distributed algorithms for tiny robots. In particular, it has been shown that the robots can complete certain tasks even in the presence of faulty robots.
In this paper, we focus on gathering of all non-faulty robots at a single point in presence of faulty robots.
We propose a wait-free algorithm (i.e., no robot waits for other robot and algorithm instructs each robot to move in every step, unless it is already at the gathering location), that gathers all non-faulty robots in semi-synchronous model without any agreement in the coordinate system and with weak multiplicity detection (i.e., a robot can only detect that either there is one or more robot at a location) in the presence of at most $n-1$ faulty robots for $n\geqslant 3$.
We show that the required capability for gathering robots is minimal in the above model, since relaxing it further makes gathering impossible to solve.

Also, we introduce an intermediate scheduling model \textit{ASYNC$_{IC}$} between the asynchronous ( i.e., no instantaneous movement or computation) and the semi-synchronous (i.e., both instantaneous movement and computation) as the asynchronous model with instantaneous computation.
Then we propose another algorithm in \textit{ASYNC$_{IC}$} model for gathering all non-faulty
robots  with weak multiplicity detection without any agreement on the coordinate system in the presence of at most $\lfloor n/2\rfloor-2$ faulty robots for $n\geqslant 7$.
\end{abstract}

%
%

\keywords{Distributed Algorithms, Fault-Tolerance, Oblivious Mobile Robots, Gathering}
\section{Introduction}
Distributed coordination among robots in multi-robot systems has garnered interest in recent years.
These multi-robot systems have applicability in various fields such as space exploration, disaster rescue, exploration, military operations, etc. The primary motivation in this field is to find the minimum capability required to achieve certain objectives for a system of robots. Over the course of study, various robot models has been used.
Specifically, "weak robots" \cite{FlocchiniPSW99} is the most widely considered model. Weak robots are \textit{autonomous}: behave independently, \textit{anonymous}: do not have identifiers, \textit{oblivious}: do not remember their past actions and \textit{silent}: do not exchange messages among each other. Mostly, they do not follow a common coordinate system. The robots are represented as points in a plane. They may have the capability of \textit{multiplicity detection}, i.e. a point having multiple robots.
A robot with \textit{weak} multiplicity detection can only figure out whether a point is occupied by exactly one robot or more than one robot.
Similarly with \textit{strong} multiplicity detection a robot can detect the exact number of robots at a point.
 They have either limited or unlimited visibility range. The robots are either transparent or non-transparent.
 Each robot follows \textit{look-compute-move} cycles.
 It observes the surrounding
in the \textit{look} phase. In the \textit{compute} phase it computes
the destination based on the observation. It moves towards the
destination point in the \textit{move} phase. In the \textit{semi-synchronous} (\textit{SSYNC}) model, the global time is divided into discrete time intervals
called rounds. In each round a subset of robots are activated.
Once a robot is activated, it finishes one \textit{look-compute-move}
cycle in that round. The \textit{fully-synchronous} (\textit{FSYNC}) model can be considered a special case of \textit{SSYNC}, since it activates all the robots in each round.
In the \textit{asynchronous} (\textit{ASYNC}) model, any robot can be activated at any time. A robot can be idle for unpredictable but finite amount of time.
We consider the scheduler to be a fair scheduler, which activates the robots infinitely many times in infinite time.

Like any distributed system, there is a possibility of failure for any robot in these muti-robot systems also.
The faults considered are of mainly two types: {\it crash faults} and {\it Byzantine faults}:
In crash faults, the faulty robots stop moving. In Byzantine faults, the faulty robots behave in an unpredictable manner \cite{agmon2006fault}.
 The faults can occur due to the unreliable components in the system, which are custom made or manually built. Also sometimes there can be defect in manufacturing.
 Apart from these, the faults can be caused by some external factor in the field of deployment.
Hence there is a need to design fault-tolerant algorithms. In this paper we design the algorithms to be crash-fault tolerant.

\begin{table*}
\centering
\begin{tabular}{|c|c|c|c|c|c|c|}\hline
&&&&&&\\
\textbf{Reference} &\textbf{Model} &\textbf{ \# Faults} & \textbf{Direction} & \textbf{Chirality} & \textbf{Multiplicity Detection} & \textbf{Valid Initial Configuration} \\
&&&&&&\\\hline
\cite{agmon2006fault}&\textit{SSYNC} & 1 & NO & NO & Weak & at most one multiplicity  \\\hline
\cite{Bouzid0T13}& \textit{SSYNC} & $n-1$ & NO & YES & Strong & not bivalent \\\hline
 \cite{BramasT15}&\textit{SSYNC} & $n-1$ & NO & NO & Strong & not bivalent \\\hline
 Our Result & \textit{SSYNC} & $n-1$ & NO & NO & Weak & at most one multiplicity \\\hline
  \cite{Bhagat201650}&\textit{ASYNC} & $n-1$ & YES & NO & NO & any \\\hline
 Our Result & \textit{ASYNC}$_{IC}$ & $\lfloor n/2\rfloor -2$ & NO & NO & Weak & any symmetric or asymmetric configuration  \\
 & & & & & &  with at most one multiplicity excluding \\
 & & & & & &  symmetric $C(0)$, $C(1/k)$, $C(1/2)$ and $C(1/2+1/k)$\\\hline
\end{tabular}
\caption{Summary of Assumptions for Fault-Tolerant Gathering}
\label{table:fault}
\end{table*}
\subsection{Related Works}
Some of the common problems for these multi-robot systems include,
\textit{leader election}: all robots agree on a leader  among themselves  \cite{ChaudhuriM15,kim1995leader,stoller2000leader}, \textit{gathering}:
 all robots gather at a single point \cite{suzuki1999distributed}, \textit{convergence}: the robots come very close to each other \cite{cohen2006convergence} and \textit{pattern formation}: the robots imitate a given pattern on the
plane \cite{suzuki1999distributed}.
The gathering problem has been studied for different models, including fully synchronous (\textit{FSYNC}), semi-synchronous (\textit{SSYNC}) and asynchronous (\textit{ASYNC}).
 In \textit{FSYNC} model, the gathering problem has been solved without making any additional assumptions to the basic model \cite{Cohen2005}. In \cite{suzuki1999distributed}, impossibility of gathering for $n=2$ without assumptions on local coordinate system agreement for \textit{SSYNC} and \textit{ASYNC} is proved.
Also, for $n>2$ it is impossible to solve gathering without assumptions on either coordinate system agreement or multiplicity detection \cite{Prencipe2007}.
In \cite{cieliebak2012distributed}, Cieliebak et al. have studied gathering with multiplicity detection.  A practical implementation of non-transparent fat robots with omnidirectional cameras opened up several new algorithmic issues \cite{HonoratPT14}.
 Chaudhuri et al. \cite{ChaudhuriM15} have proposed a deterministic algorithm for leader election and gathering for transparent fat robots without common sense of direction or chirality. Common chirality is basically the common clockwise order.

In recent years, devising algorithms that achieve the goal even in the presence of a few faulty robots has piqued the interest \cite{neiger1990automatically}.  In \textit{SSYNC} model, Agmon et al. \cite{agmon2006fault} have proposed an algorithm to gather robots with at most one faulty robot. The various problems in crash and Byzantine fault model are explored and some of the existing results are compiled by Clement et al. \cite{DefagoP0MPP16}.
Auger et al. \cite{AugerBCTU13} have proved that it is impossible to converge  oblivious mobile robots if more than one half and more than one third of the robots exhibit Byzantine failures respectively. Bouzid et al. \cite{Bouzid0T13} have proposed a wait-free crash-fault tolerant gathering algorithm with robots having strong multiplicity detection and chirality.
Bramas and Tixeuil \cite{BramasT15} has proposed a wait-free gathering algorithm for robots with arbitrary number of faults, which removed the assumption of chirality, but still has strong multiplicity detection as opposed to weak multiplicity detection in \cite{agmon2006fault}. They conjecture that weak multiplicity detection can only solve gathering for distinct initial positions.
 Bhagat et al. \cite{Bhagat201650} have solved the problem of gathering in \textit{ASYNC} setting $(n,n-1)$ crash fault model in 2D under agreement on the direction and orientation of one axis. To the best of our knowledge there is no fault-tolerant algorithm in the \textit{ASYNC} model without any agreement in coordinate system. The various assumptions in the results we found along with our results for fault-tolerant gathering are summarized in Table~\ref{table:fault}.

\subsection{Our Contributions}
 In this paper we propose two gathering algorithms, where the robots do not share a common direction (unlike \cite{Bhagat201650}) or chirality (unlike \cite{Bouzid0T13}). The robots only have capability of weak multiplicity detection (as opposed to strong multiplicity detection in \cite{Bouzid0T13,BramasT15}). The relaxation of assumption is possible because of our strategy of moving the robots in non-intersecting circular paths, which avoids creation of multiple multiplicity points.
\begin{enumerate}
\item We propose a wait-free gathering algorithm, which achieves gathering in finite time without common sense of direction or chirality only with weak multiplicity detection. This algorihtm extends the result by Agmon and Peleg \cite{agmon2006fault} to tolerate $n-1$ faults in the same model.
\item  We prove the conjecture in \cite{BramasT15} that strong multiplicity is required for gathering in presence of more than one multiplicity points in the admissible initial configurations in \textit{SSYNC} model.
\item We initiate the study on an asynchronous scheduling model with instantaneous computation (\textit{ASYNC}$_{IC}$) which is an intermediate model between semi-synchronous and asynchronous.
\item We propose a fault-tolerant algorithm in the \textit{ASYNC}$_{IC}$ which can gather even if almost half the number of total robots are faulty.
\end{enumerate}
\subsection{Paper Organization}
The remainder of the paper is organized as follows. Section~\ref{sec:prel} introduces the necessary background.
Section~\ref{sec:impossibity} shows an impossibility result that gathering is unsolvable in presence of more than one multiplicity point in \textit{SSYNC} model.
Section~\ref{sec:ssyncgather} presents algorithms for gathering in \textit{SSYNC} model. Section~\ref{sec:asyncic} introduces \textit{ASYNC}$_{IC}$ model. Then section~\ref{sec:asyncgather} presents a gathering algorithm in the \textit{ASYNC}$_{IC}$ model, before we conclude in section~\ref{sec:conclusion}.
\section{Preliminaries}\label{sec:prel}
Some important notations are as follows.
\begin{itemize}
\item $R = \{r_1,r_2,\cdots,r_n\}$ be the set of $n$ point robots in a Euclidean plane. Robot $r_i$ occupies the position $p_i$ in the plane. A multiplicity point can have multiple robots.
\item The configuration $C = \{p_1,p_2,\cdots,p_n\}$, is a multiset, which represents the robots in the plane.
\item A configuration is a \textit{legal configuration} if it has at most one multiplicity point.
\item The smallest enclosing circle ($SEC$) of a configuration $C$ is denoted by $SEC(C)$.
\item $C^b$ denotes the set of robots on the boundary of $SEC$. \\ $C^b_{fix}$ is the set of robots on boundary which are fixed.
\item In the ($n,f$) crash-fault model, out of the $n$ robots at most $f$ robots are faulty.
\end{itemize}

	 The robots have following behaviors in the crash fault model.
\begin{behavior}\label{behav:mindist}
For a non-faulty robot $r_i$ at position $p_i$ moving towards the destination  $p^*$ in its activated cycle, if the distance between $p_i$ and $p^{*}$ is less than $S$, where $S$ is a constant, then the robot $r_i$ reaches $p^*$ in the same cycle. Otherwise the robot stops
at a point on the line $\overline{p_ip^*}$ which is at least at a distance $S$ away from $p_i$.
\end{behavior}
\begin{behavior}
 A robot may become faulty at any point of time.
\end{behavior}
Gathering of robots in a crash-fault model is the gathering of all non-faulty robots at one point in finite time.

\section{Impossibility}\label{sec:impossibity}

It is impossible to gather two robots in SSYNC without agreement in coordinate system \cite{suzuki1999distributed}. The adversary can always schedule the robots such that at the end there would be exactly two multiplicity points.
As the robots only have capability of detecting either it is a multiplicity or not, not the capability to find out exactly how many robots are there in a multiplicity, so both multiplicity points behave the same as two robots in SSYNC model. Hence the theorem follows.
\begin{theorem}\label{theorem:impossibility}
For a non-legal configuration, it is impossible to design a wait-free deterministic algorithm which gathers all the robots with weak multiplicity detection in SSYNC model.
\end{theorem}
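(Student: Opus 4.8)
The plan is to reduce the statement to the classical impossibility of gathering two robots in SSYNC without agreement on a coordinate system \cite{suzuki1999distributed}. To do so I would exhibit a single non-legal configuration on which no deterministic wait-free algorithm can succeed. Concretely, take a configuration with exactly two occupied points $A$ and $B$, each carrying a multiplicity (at least two robots; the two multiplicities need not be of equal size), and place $A$ and $B$ symmetrically with respect to the point reflection $\sigma$ whose centre is the midpoint $m$ of $\overline{AB}$, so that $\sigma(A)=B$ and $\sigma$ fixes $m$. The whole configuration is then invariant under $\sigma$. The crucial model feature is that, with weak multiplicity detection, every robot perceives each of the two points merely as ``occupied by more than one robot''; it can read neither the exact number of robots at a point nor use that number to tell $A$ apart from $B$.

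First I would establish that each multiplicity point stays \emph{coherent} throughout the execution. All robots sitting at $A$ occupy the same position and obtain exactly the same snapshot, so by determinism they compute the same destination; if the adversary always activates them together they move together and remain co-located. The same holds for $B$. Hence the adversary can guarantee that at every round the configuration consists of (at most) two occupied points, each still a multiplicity. Next I would show that the adversary can preserve the $\sigma$-symmetry: activating both clusters simultaneously (a fully synchronous step), anonymity and the absence of a common coordinate system force the destination chosen at $B$ to be the $\sigma$-image of the destination chosen at $A$, so the new configuration is again $\sigma$-symmetric with the same centre $m$. Consequently the two clusters can only ever merge by both landing exactly on $m$, i.e.\ gathering would require collapsing the symmetric pair onto its centre of symmetry.

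Treating each coherent cluster as a single \emph{super-robot}, the pair $(A,B)$ now behaves as a two-robot system with no coordinate agreement, to which the cited impossibility applies: for every deterministic rule there is a fair schedule (alternating single activations and symmetric simultaneous activations) that prevents the two robots from ever occupying a common point, and in particular prevents the collapse onto $m$. Replaying that schedule at the level of the two clusters (interpreting ``activate robot $i$'' as ``activate the whole cluster at that point'') keeps the system in a two-multiplicity state forever, so the non-faulty robots never gather. Since an algorithm claimed to solve gathering for non-legal configurations would in particular have to succeed here, this yields the contradiction and the theorem follows.

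The step I expect to be the main obstacle is making the super-robot reduction fully rigorous despite the mismatch between the snapshots of the two settings: a robot standing at a multiplicity point sees its own point and the other point flagged as multiplicities, whereas the two single robots of \cite{suzuki1999distributed} see two singletons, so a priori the algorithm may respond differently to the two view types. The key point to nail down is precisely that weak multiplicity detection supplies no information capable of breaking the $\sigma$-symmetry --- the robots cannot count, hence the symmetric pair of multiplicities is indistinguishable from a symmetric pair of single robots as far as symmetry-breaking is concerned --- so the adversarial schedule of the two-robot lower bound transfers verbatim. I would also have to check that the constructed schedule is fair (each robot is activated infinitely often) and remark that configurations with additional single robots, or with more than two multiplicities, are handled by the same symmetric construction.
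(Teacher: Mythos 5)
Your proposal is correct and rests on the same core reduction as the paper: with weak multiplicity detection, two multiplicity points are indistinguishable from two single robots, so the two-robot impossibility of Suzuki and Yamashita applies. The structural difference is in how each argument reaches that reduction. The paper proves a broader claim about arbitrary non-legal configurations: it shows by an adversarial scheduling argument that from any configuration with more than one multiplicity point, merges cannot eliminate multiplicities down to one (the adversary simply declines to activate one of the merging clusters), so the execution always contains a two-multiplicity configuration, and only then invokes the two-robot result --- rather informally, asserting the equivalence in one line. You instead exhibit a single symmetric two-multiplicity counterexample and do the work the paper skips: cluster coherence, preservation of the point symmetry, and the explicit super-robot simulation that transfers the adversarial schedule. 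Since refuting the existence of an algorithm only needs one bad initial configuration, your route is logically sufficient and is arguably more rigorous exactly where the paper is weakest; conversely, the paper's merging argument is what you would need to make your closing remark (that configurations with extra singletons or more multiplicities ``are handled the same way'') precise, since those configurations are not symmetric and your construction does not directly apply to them. One small repair to your coherence step: robots co-located at $A$ do \emph{not} automatically obtain identical snapshots, because each has its own local coordinate frame; you need the standard convention that the adversary also chooses the local frames, aligning them for co-located robots (and choosing the frames at $A$ and $B$ to be $\sigma$-related), after which determinism gives identical destinations as you claim.
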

\begin{proof}
Suppose a deterministic wait-free algorithm $\psi$, gathers all the robots with weak multiplicity detection without agreement in coordinate system  in \textit{SSYNC} model from a configuration with more than one multiplicity point in finite time.
Say $C_t$ be the final configuration. Then there is exactly one multiplicity point in $C_t$.
If $C_{t'}$, for some $t'<t$, has more than two multiplicity points and $C_t$ has only one multiplicity then the three or more multiplicity points are merged into one multiplicity point before time $t$.
Now the three of them had merged at one of the three points or at some other point to create a single multiplicity. The adversary can always choose to not activate the robots in one multiplicity point which is not the destination in that cycle. Then $C_t$ would have two multiplicities. Hence it is not possible to have a configuration which has only one multiplicity point starting from a configuration with more than one multiplicity points.

Now consider the configuration $C_{t'}$ with two multiplicity points. Say all other robots except multiplicity robots gather at either one of the two multiplicity or any other point. If they go to one of the multiplicity then, the number of multiplicity doesn't decrease and remains two. If they go to some other point, then if the other robots create another multiplicity point then, it again becomes more than two multiplicity configuration.

So we can always argue that, there would always be a configuration with two multiplicity  points during the execution of algorithm. Now, the configuration with only two multiplicity points is same as having only two robots in SSYNC model. So those two can not gather \cite{suzuki1999distributed}, which is a contradiction. Hence no such algorithm $\psi$ exists.
\end{proof}
\begin{observation}
If any algorithm is $n-1$ fault-tolerant, then it must be wait-free.
\end{observation}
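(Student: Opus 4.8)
The plan is to establish the contrapositive: if an algorithm $A$ is not wait-free, then it cannot tolerate $n-1$ crash faults. Unwinding the definition of wait-freedom given in the abstract, ``not wait-free'' means there is a reachable legal configuration $C$ and a non-faulty robot $r_i$, situated at a point $p_i$ that is not the point at which gathering is to occur, such that when $r_i$ is activated in $C$ the algorithm instructs it to stay in place. Since $A$ is deterministic and the robots are oblivious, the instruction $A$ returns to $r_i$ is a function of $r_i$'s snapshot of the configuration alone; $r_i$ is told to remain stationary precisely because it is ``waiting'' for the configuration it sees to change, i.e.\ for some nonempty set $W$ of other robots to move first.

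The core of the argument is an adversary that defeats this wait. At the moment $r_i$ is told to wait, the adversary crashes every robot in $W$ in place. By Behavior~2 a robot may crash at any time, so this is admissible, and it costs $|W|$ faults. As long as $|W| \le n-1$ and at least one further non-faulty robot survives at a position distinct from $p_i$, the configuration can no longer evolve in the manner $r_i$ is waiting for: the snapshot $r_i$ records on each of its subsequent activations is unchanged, so by determinism $A$ again returns ``wait.'' Hence $r_i$ is frozen at $p_i$ forever, never reaching the gathering point, and the non-faulty robots are never gathered, contradicting $n-1$ fault tolerance.

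The step I expect to be the main obstacle is guaranteeing that gathering is genuinely left unsolved, rather than trivially solved. A single surviving non-faulty robot is already ``gathered'' at its own location, so the adversary must crash the waited-for set $W$ while keeping at least two non-faulty robots at distinct positions; I must check that the fault budget $n-1$ accommodates both requirements simultaneously. The delicate case is when $r_i$ waits for essentially all other robots, so that crashing $W$ would leave $r_i$ alone: here I would instead retain one robot of $W$ as non-faulty and argue that the resulting two-robot non-faulty sub-system cannot gather, falling back on the $n=2$ impossibility of \cite{suzuki1999distributed} already used in Theorem~\ref{theorem:impossibility}. Formalizing ``$r_i$ waits for the set $W$'' and verifying that freezing $W$ truly leaves $r_i$'s view invariant across all later activations is the crux of turning this intuition into a rigorous proof.
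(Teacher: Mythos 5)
Your core argument is the paper's own proof, only spelled out more carefully: the paper's entire justification is the two-sentence contrapositive that a non-wait-free algorithm has a non-faulty robot waiting for some other robots to move, and if all of those robots have crashed the wait is indefinite. Your second paragraph (crash the waited-for set, use determinism and obliviousness to keep $r_i$'s view, hence its decision, unchanged) is exactly that argument made precise.

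Where you go beyond the paper --- the ``delicate case'' in your last paragraph --- your proposed repair does not work. The fallback to the $n=2$ impossibility of \cite{suzuki1999distributed} is not available: that result concerns a plane containing exactly two robots, whereas in your scenario the crashed robots of $W$ remain as visible points. Their presence generically breaks the symmetry that the two-robot adversary argument exploits; two non-faulty robots amid asymmetrically placed crashed robots can in general agree deterministically on a common destination (for instance via leader election on an asymmetric configuration, as in \cite{ChaudhuriM15}, which this paper itself invokes), so no contradiction follows. Worse, retaining one robot of $W$ as non-faulty destroys the very view-invariance your freezing argument needs: that robot may move, changing $r_i$'s snapshot, after which determinism no longer forces $r_i$ to wait. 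The paper sidesteps the whole issue by implicitly adopting the reading suggested by its definition of wait-freedom: a non-faulty robot frozen forever at a point other than the gathering location is itself the failure, irrespective of how many non-faulty robots survive. Under that reading, crashing all of $W$ (at most $n-1$ faults) already completes the proof and your single-survivor case needs no separate treatment; under your stricter reading (a lone non-faulty robot counts as trivially gathered), the observation needs an argument that neither the paper nor your fallback supplies.
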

If the algorithm is not wait-free, then the non-faulty robot waits for some other robots to move. If all of them are faulty, then it results in an indefinite wait cycle. So the algorithm has to be wait-free. From theorem~\ref{theorem:impossibility}, it is clear that any wait-free gathering algorithm for \textit{SSYNC} model can gather non-faulty robots only starting from a configuration with at most one multiplicity point.

\section{Gathering ($n,n-1$) Crash Fault in SSYNC Model}\label{sec:ssyncgather}
\subsection{Model}\label{subsec:ssyncmodel}
The robots are modeled as points in a Euclidean plane. Each
robot can observe the environment and determine its position
along with the positions of other robots in its local coordinate
system. The robots are autonomous, anonymous, oblivious, homogeneous and silent.
The robots have unlimited visibility. They have the capacity of
weak multiplicity detection. Each robot follows
an atomic \textit{look-compute-move} cycle.
 Here we have considered the crash-fault
model. If a robot becomes faulty then it stops functioning.
The robots follow \textit{SSYNC} scheduling.

The problem statement is as follows:

\begin{pdefinition}(($n,n-1$) Crash Fault):
 Given $n$ anonymous, homogeneous, oblivious, point robots with unlimited visibility in a legal initial configuration with no agreement in coordinate system, having the ability to detect multiplicity points in \textit{SSYNC} model. The objective is to achieve gathering of non-faulty robots in the $(n,n-1)$ crash fault system for $n \geq 3$.
\end{pdefinition}
\subsection{Algorithm and Correctness}\label{subsec:ssyncgather}
In this section we present the Algorithm~\ref{algo:gatherk} (\textsc{Gather\_k($C,r_i$)}), which gathers the non-faulty robots. In the \textit{SSYNC} model, all the robots activated in each cycle look at the same time. Hence the view of each robot is the same. If there is a multiplicity point in the configuration then all the robots go towards the multiplicity point. If there is no multiplicity point then they go towards the center of $SEC$. As any robot which is activated can move, the $SEC$ can change in subsequent rounds. But for each round, the $SEC$ is same for all robots and hence the destination point (i.e., the center of $SEC$) also remains invariant for that round. All activated robots move towards the destination point  in a straight or circular path such that their paths do not intersect except for the destination point as described in Algorithm~\ref{algo:movetodest} (\textsc{MoveToDest}), which ensures the creation of a single multiplicity point. Algorithm~\ref{algo:movetodest} uses Algorithm~\ref{algo:findtangent} (\textsc{FindTangent}) as a subroutine which returns a line, which should be tangent to the circular path.
 Once the multiplicity point is formed, all robots move towards the multiplicity point, which remains invariant until gathering is finished in a similar manner.

We prove the correctness of the algorithms in a number of lemmas followed by a theorem. First, Lemma~\ref{lem:raddec} shows that the radius of the $SEC$ decreases every time the center of $SEC$ changes.
Lemma~\ref{lem:del2} shows the minimum decrement of distance from center of $SEC$ for a robot which moved radially towards the center of $SEC$ in the previous round. Lemma~\ref{lem:mindist} shows the minimum radial distance covered by a robot, if it moves in a circular path for a distance $S$.
Lemma~\ref{lem:destgather} proves that all the paths of robots are non-intersecting and Lemma~\ref{lem:multgather}
ensures that all the robots can reach their destination in finite time. Then
Theorem~\ref{theorem:kfault} using the previous lemmas guarantees that Algorithm~\ref{algo:gatherk} gathers all the robots in finite time in \textit{SSYNC} model.

\begin{lemma}\label{lem:raddec}
If the center of $SEC$ is different in the two consecutive cycles, then the new radius is less than the previous one.
\end{lemma}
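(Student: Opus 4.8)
The plan is to compare the smallest enclosing circles of the two consecutive configurations. Let $O$ and $r$ denote the center and radius of $SEC$ before the round, and let $O'$ and $r'$ denote those of the $SEC$ after the round; assume $O\neq O'$. The heart of the argument is the observation that every robot, after being processed in that round, still lies inside the closed disk $\overline{D(O,r)}$ bounded by the old $SEC$. Granting this, $\overline{D(O,r)}$ is an enclosing disk of the new configuration, so by minimality of the new $SEC$ we immediately obtain $r'\leqslant r$; the remaining work is to upgrade this to a strict inequality once the center has moved.

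First I would establish the containment. In the \textit{SSYNC} model the robots that are not activated (including the crashed ones) do not move, and since they were already inside $\overline{D(O,r)}$ they remain there. An activated robot moves towards the destination $O$, which is the center of the current $SEC$, either along the segment $\overline{p_iO}$ or along a circular arc produced by \textsc{MoveToDest}. For a straight move the new position lies on $\overline{p_iO}\subseteq\overline{D(O,r)}$ by convexity of the disk; for a circular move I would invoke the construction of \textsc{MoveToDest} (via \textsc{FindTangent}), whose arcs are directed inward toward $O$ and hence stay within $\overline{D(O,r)}$. Consequently the entire new configuration is contained in $\overline{D(O,r)}$.

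Then I would finish using the uniqueness of the smallest enclosing circle. Suppose, for contradiction, that $r'=r$ while $O'\neq O$. Since the new configuration fits inside $\overline{D(O,r)}$ and $r$ equals the minimum enclosing radius $r'$, the disk $\overline{D(O,r)}$ is itself a smallest enclosing circle of the new configuration. But the $SEC$ of a point set is unique, so $\overline{D(O,r)}=\overline{D(O',r')}$ and therefore $O=O'$, contradicting $O'\neq O$. Hence $r'<r$. Equivalently, one can avoid the uniqueness fact and argue directly that two distinct disks of the common radius $r$ meet in a lens contained in the disk centered at $\tfrac12(O+O')$ of radius $\sqrt{r^{2}-\lvert OO'\rvert^{2}/4}<r$, which would already enclose the configuration and contradict $r'=r$.

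The step I expect to be the main obstacle is the containment claim for the circular arcs: a circular arc joining two points of a disk need not remain inside the disk in general, so the argument genuinely depends on the specific inward arcs generated by \textsc{MoveToDest} and \textsc{FindTangent} rather than on arbitrary curved paths. For the straight-line portion the containment is immediate from convexity, and once containment is secured the radius comparison is routine.
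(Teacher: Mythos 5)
Your proposal is correct and follows essentially the same route as the paper: both hinge on the containment of the post-move configuration in the old $SEC$ disk (which the paper, like you, asserts for the algorithm's inward paths rather than proves), and both rule out $r'=r$ with $O'\neq O$ by observing that the configuration then lies in the lens of two equal-radius circles, which fits inside a strictly smaller circle --- your explicit bound $\sqrt{r^{2}-\lvert OO'\rvert^{2}/4}$ is exactly the quantitative form of the paper's figure-based argument. Your primary variant via uniqueness of the $SEC$ is only a minor shortcut, and if anything your write-up is the more complete one, since the paper never explicitly dispatches the case $r'>r$ via minimality as you do.
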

\begin{proof}
Without loss of generality consider $C$ is the configuration before movement of robots and $C'$ is the configuration after movement.
In Figure \ref{fig:raddec}, $SEC$ of $C$ denoted with dotted circle and $SEC$ of $C'$ denoted with dashed circle. Assume that radius of $SEC$ of $C'$ is same as radius of $SEC$ of $C$. Consider the position of robots in consecutive cycles. The robots have moved on or inside the $SEC$ from configuration $C$. We can argue that the robot positions in $C'$ are within the intersection of $SEC$s of $C$ and $C'$.
\begin{figure}[H]\centering
\includegraphics[height=0.4\linewidth]{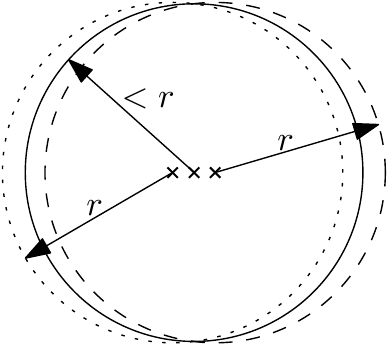}
\caption{Two different circles with same radius and a circle enclosing intersection of the two circles}\label{fig:raddec}
\end{figure}
But the smallest circle which can cover the intersection area of $SEC$s of $C$ and $C'$ is denoted with solid circle in Figure~\ref{fig:raddec} which has a radius less than $r$. Hence the $SEC$ of $C'$ denoted with dashed circle is not the $SEC$ of $C'$, which is a contradiction.
\end{proof}

\begin{lemma}\label{lem:del2}
If a robot moves distance $\delta$ towards the center of $SEC$ radially, then the distance from the center of $SEC$ in the next cycle is at most $\cfrac{r-\delta}{2} +\cfrac{1}{2}\sqrt{(r+\delta)^2-2\delta^2}$, where $r$ is the radius of $SEC$ in the current cycle.
\end{lemma}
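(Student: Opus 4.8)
The plan is to bound how far the next cycle's center $O'$ can drift from the robot once it has stepped inward. Fix coordinates with the current $SEC$ centered at the origin $O$ and of radius $r$, and align the axes so that, before moving, the robot sits on the current $SEC$ at $(r,0)$; since it travels distance $\delta$ radially toward $O$, its new position is $A' = (r-\delta,0)$, at distance $r-\delta$ from $O$. Because every robot moves on or inside the current $SEC$, the whole next configuration lies in the closed disk $\overline{B(O,r)}$, so I may treat $O'$ and the new radius $r'$ as the center and radius of the smallest enclosing circle of a point set contained in $\overline{B(O,r)}$ that still contains $A'$.

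The first key step is a Pythagorean-type inequality relating the two circles, namely
\[
|OO'|^2 + r'^2 \le r^2 .
\]
I would prove this from the standard fact that the center $O'$ of a smallest enclosing circle lies in the convex hull of the robots it passes through: writing each such boundary robot as $P_i = O' + r' u_i$ with $|u_i|=1$, the relation $O' \in \mathrm{conv}\{P_i\}$ gives nonnegative weights $\lambda_i$ with $\sum_i \lambda_i = 1$ and $\sum_i \lambda_i u_i = 0$. Each $P_i$ lies in $\overline{B(O,r)}$, so $|O'-O|^2 + 2r'\,u_i\!\cdot\!(O'-O) + r'^2 \le r^2$; taking the $\lambda_i$-weighted average kills the cross term and yields the displayed inequality.

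The second step is immediate: since $A'$ belongs to the new enclosing circle, $|A'-O'| \le r'$, and combining with the inequality above gives $|A'-O'|^2 + |OO'|^2 \le r^2$. By the median (Apollonius) identity this says exactly that $O'$ lies in the disk centered at the midpoint $M = \big(\tfrac{r-\delta}{2},0\big)$ of the segment $OA'$ with radius $\tfrac12\sqrt{2r^2-(r-\delta)^2} = \tfrac12\sqrt{r^2+2r\delta-\delta^2}$. The farthest a point of this disk can be from $A'$ is $|A'M|=\tfrac{r-\delta}{2}$ plus the radius, i.e.
\[
|A'-O'| \le \frac{r-\delta}{2} + \frac12\sqrt{r^2+2r\delta-\delta^2} = \frac{r-\delta}{2} + \frac12\sqrt{(r+\delta)^2 - 2\delta^2},
\]
which is the claimed bound; equality is attained when two other robots sit on the old boundary symmetrically about the line $OA'$, so that the new circle has them as a diameter.

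I expect the inequality $|OO'|^2 + r'^2 \le r^2$ to be the main obstacle, since everything else is forced once it is in hand. The subtlety is that the new enclosing circle need \emph{not} lie inside the old disk (it can protrude), so one cannot simply argue by containment; the convex-combination argument over the pinning robots is what makes the relation hold uniformly, whether the new $SEC$ is determined by two antipodal robots or by three robots forming an acute triangle. The mild case where the center does not move is covered automatically, as the point $O' = O$ satisfies the same disk constraint and the bound exceeds $r-\delta$.
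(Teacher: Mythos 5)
Your proof is correct, and it follows a genuinely different route from the paper's. The paper constructs the extremal configuration outright: it places the moved robot on the boundary of the new $SEC$, collinear with the two centers (justified by the triangle inequality $p'O' \le p'O + OO'$), posits that the new $SEC$ is pinned by two robots sitting at the intersection points of the old and new circles so that they form a diameter of the new circle, and then solves the right-triangle relation $r^2 = x^2 + \frac{l^2}{4}$ together with $r' = r-\delta+x$ for the center displacement $x$, from which the stated bound follows. The weakness of that argument is that the identification of this configuration as the worst case (maximizing $p'O'$) is asserted geometrically rather than proved. Your argument replaces that assertion with the uniform inequality $|OO'|^2 + r'^2 \le r^2$, proved cleanly from the convex-hull property of smallest-enclosing-circle centers; once it is in hand, $|A'O'| \le r'$ and the Apollonius identity confine $O'$ to an explicit disk, and the bound follows by maximizing distance from $A'$ over that disk, with the paper's extremal configuration reappearing only as your equality case. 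This buys two things: rigor (no unproven worst-case identification, and no dependence on whether the new circle is pinned by two antipodal robots or three) and generality, since your key inequality gives $r'^2 \le r^2 - |OO'|^2 < r^2$ whenever $O' \neq O$, re-proving Lemma~\ref{lem:raddec} at no extra cost. The one hypothesis you share with the paper --- that all robots remain on or inside the current $SEC$ during a round --- is likewise assumed rather than proved in the paper's Lemma~\ref{lem:raddec}, so your proposal introduces no gap that the paper does not already have.
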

\begin{proof}
Let $C$ and $C'$ be two configurations in consecutive cycles.
 As shown in Figure~\ref{fig:del-x}, the robot $r_i$ at $p$, represented as a black disk, after moving distance $\delta$ towards the center of $SEC$ in $C$ ends up on the boundary of $SEC$ in $C'$, denoted by $p'$, i.e. $pp' = \delta$. Say the centers of $SEC$ in $C$ and $C'$ are $O$ and $O'$ and $OO'=x$.
 If $p'$ is not collinear with $O$ and $O'$, then from triangle inequality $p'O' < p'O + OO' = r - \delta + x$.
We are trying to maximize $p'O'$ as it implies the distance of destination from the new position of robot.
The value $p'O'$ is maximum when the new position of robot is collinear with the centers i.e., the bisector of the chord made by the intersection points of two circle as shown in Figure~\ref{fig:del-x}. Then $p'O' = r - \delta +x$.\\
\begin{figure}[H]\centering
\includegraphics[height=0.4\linewidth]{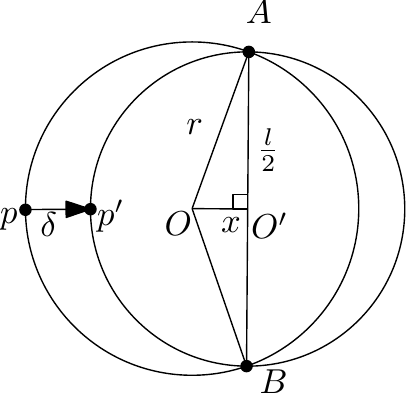}
\caption{Robot moving from boundary to boundary in consecutive configurations}\label{fig:del-x}
\end{figure}
Consider there are two robots on the intersection of both the circles at $A$ and $B$ in $C$ and $C'$. The largest $SEC$ possible with these three robots on the boundary is the circle with the other two robots on the ends of diameter.
 The largest $SEC$ radius can be achieved is $\frac{l}{2}$ where $l = AB$. From the right angled triangle $OO'A$

$$ r^2 = \frac{l^2}{4} + x^2 \implies \frac{l}{2} = \sqrt{r^2 - x^2}$$
Now
$$ r^2 - x^2 = (r-\delta+x)^2 \implies \delta - x = \frac{r+\delta}{2} -\frac{1}{2}\sqrt{(r+\delta)^2-2\delta^2}>0$$
Here $pO - p'O'=\delta-x$ is the minimum decrement of distance from center of $SEC$ for a robot which moved in previous round.
\end{proof}

\begin{algorithm}[!h]
\caption{\textsc{MoveToDest($C$,$p$,$p^*$,$style$,$distance$)}}\label{algo:movetodest}
\SetKwInOut{Input}{Input}\SetKwInOut{Output}{Output}
\Input{Robot position, destination point, movement style and distance}
\Output{Movement of robot}
\eIf{$style = straight$}
	{Move from $p$ to $p^*$ in a straight line}
	{$l$ = \textsc{FindTangent($C,p,p^*$)}\\
	Let $G$ be the circle passing through $p$ and $p^*$ with line $l$ as a tangent to $G$ at $p^*$.\\
	Move from $p$ to $p^*$ along the arc of circle $G$ in that sector.
	 }
\eIf{$distance = full$}{Move completely to $p^*$ along the path.}
	{Move along the path until the midpoint of path.}
\end{algorithm}
\begin{algorithm}[!h]
\caption{\textsc{FindTangent($C$,$p$,$p^*$)}}\label{algo:findtangent}
\SetKwInOut{Input}{Input}\SetKwInOut{Output}{Output}
\Input{Robot position, destination and configuration}
\Output{A line}
\eIf{there is a robot at $p'$ next to $p$ on $\overrightarrow{p^*p}$ away from $p^*$}
	{	
		$l$ = \textsc{FindTangent($C,p',p^*$)}\\
		Let circle $G$ passes through $p'$ and $p^*$ with $l$ as a tangent at $p^*$.\\
		\eIf{arclength of the minor arc is less than $S$ }
			{Return $l$}
			{Let $p''$ be the point on $G$ such that length of $\widearc{p'p''}$ is $S$\\
			Return $\overline{p''p^*}$}
	}
	{Find the robot free sector of $SEC$ adjacent to $p$ with smallest angle, say $\mu$.\\
	Return the angle bisector of $\mu$.}
\end{algorithm}

 \begin{algorithm}[!h]
\caption{\textsc{Gather\_k($C, r_i$)}}\label{algo:gatherk}
\SetKwInOut{Input}{Input}\SetKwInOut{Output}{Output}
\Input{A configuration $C$ and robot $r_i$}
\Output{Destination and movement path of robot.}
\eIf{there is a multiplicity point $m$}{$p^* = m$}
{$p^* = O$, where $O$ is the center of $SEC$}
\eIf{$r_i$ has a robot free path towards $p^*$}
{\textsc{MoveToDest($p_i,p^*,straight,full$))}}
{\textsc{MoveToDest($p_i,p^*,circular,full$))}}
\end{algorithm}
\begin{lemma}\label{lem:mindist}
For any non-faulty robot, the minimum radial distance traveled towards the center of $SEC$ in one cycle is at least $\lambda$, where $\lambda > \frac{S^2}{2r}$.
\end{lemma}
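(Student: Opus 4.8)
The plan is to reduce the statement to a single extremal configuration and a short trigonometric estimate. By Behavior~\ref{behav:mindist}, a robot that does not reach its destination in one activation ends the cycle at a point $q$ whose straight-line distance from its start $p$ is at least $S$; for a lower bound I take this distance to be exactly $S$, since moving farther only increases the radial progress. Write $O$ for the center of the current $SEC$, let $Op=d\le r$ be the robot's distance from $O$ before moving, and define the radial progress as $d-Oq$. If the robot uses a straight path it moves directly towards $O$, so its radial progress is exactly $S$, which already exceeds the claimed bound; hence the binding case is the circular one, and the rest of the argument concerns a robot traversing an arc produced by \textsc{MoveToDest}.

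The key step is a purely geometric observation about that arc. It lies on the circle $G$ through $p$ and the destination $O$, and the robot travels along the arc of the chord $\overline{Op}$ that is contained in the robot-free sector inside $SEC$, which is the minor arc. Consequently the intermediate point $q$ lies on this minor arc, so by the inscribed-angle theorem the angle $\angle Oqp$ subtended at $q$ is at least a right angle. Applying the law of cosines in triangle $Opq$ with $Op=d$, $pq=S$ and $\angle Oqp\ge 90^{\circ}$ yields $d^{2}\ge Oq^{2}+S^{2}$, i.e. $Oq\le\sqrt{d^{2}-S^{2}}$. Therefore the radial progress of any moving robot is at least $d-\sqrt{d^{2}-S^{2}}$.

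To finish, I would note that $\phi(d)=d-\sqrt{d^{2}-S^{2}}$ is decreasing in $d$, so over all robots (each with $d\le r$) the progress is minimized at $d=r$, giving the uniform bound $\lambda=r-\sqrt{r^{2}-S^{2}}$. The final inequality $\lambda>\tfrac{S^{2}}{2r}$ then follows by rewriting it as $r-\tfrac{S^{2}}{2r}>\sqrt{r^{2}-S^{2}}$ (the left side being positive for $S<r\sqrt{2}$) and squaring: the left side squared equals $r^{2}-S^{2}+\tfrac{S^{4}}{4r^{2}}$, which strictly exceeds $r^{2}-S^{2}$.

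I expect the main obstacle to be the geometric justification that the robot actually traverses the minor arc of $\overline{Op}$, since everything downstream rests on the inscribed angle $\angle Oqp$ being at least $90^{\circ}$; this must be read off from the choice of tangent line in Algorithm~\ref{algo:findtangent} and the sector restriction in Algorithm~\ref{algo:movetodest}, rather than assumed. A secondary point to dispatch is the degenerate case in which the robot reaches $O$ within the cycle (radial progress $d$) or has $d<S$, neither of which can make the progress smaller than the extremal arc case handled above.
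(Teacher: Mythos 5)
Your proposal is correct, but it is not the paper's argument, and the difference is instructive. The paper reduces to the same extremal configuration you identify (robot at distance $d=r$ on the boundary of the $SEC$, path a semicircle on the chord $\overline{pO}$), but it gets there via an informal ``highest curvature'' claim rather than your rigorous chain (law of cosines with $\cos\angle Oqp\le 0$, then monotonicity of $d-\sqrt{d^2-S^2}$), and it measures the movement as \emph{arc length}: the robot travels arc length $S$ on the circle of radius $r/2$, giving central angle $\theta=2S/r$ and radial progress $\lambda=r\left(1-\cos\frac{S}{r}\right)$. The paper then asserts $r\left(1-\cos\frac{S}{r}\right)>\frac{S^2}{2r}$ using $\cos x<1-\frac{x^2}{2}$, an inequality that goes the wrong way (in fact $1-\cos x\le \frac{x^2}{2}$), so under the arc-length reading the strict bound of Lemma~\ref{lem:mindist} is not actually established by the paper's computation. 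Your proof instead reads Behavior~\ref{behav:mindist} as a lower bound of $S$ on the straight-line displacement $pq$, and that is precisely what makes the strict inequality come out cleanly: $Oq\le\sqrt{d^2-S^2}$, worst case at $d=r$, and $r-\sqrt{r^2-S^2}=\frac{S^2}{r+\sqrt{r^2-S^2}}>\frac{S^2}{2r}$. So your route buys rigor (the semicircle emerges as the equality case of an inequality, not a heuristic) and avoids the paper's Taylor-direction slip; its cost is the interpretive commitment on Behavior~\ref{behav:mindist} --- under the paper's own path-length reading your step $pq=S$ would become $pq=r\sin\frac{S}{r}<S$ in the extremal case, and the bound degrades back to $r\left(1-\cos\frac{S}{r}\right)$, which sits slightly \emph{below} $\frac{S^2}{2r}$. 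Your two flagged obligations are real but discharge as you expect: the tangent returned by \textsc{FindTangent} makes angle $\mu/2\le 90^{\circ}$ with the chord at the destination, so the chord's central angle in $G$ is $\mu\le 180^{\circ}$, the arc lying in the robot-free sector is the minor arc, and hence $\angle Oqp\ge 90^{\circ}$.
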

\begin{proof}
The minimum radial displacement towards the center of $SEC$ can be achieved when the arc of circle on whose boundary it travels has the highest curvature. The length of an minor arc between two points on a circle is highest when the arc is a semicircle.
\begin{figure}[H]\centering
\includegraphics[height=0.4\linewidth]{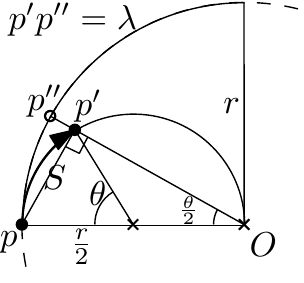}
\caption{Robot moving along a circular path towards destination point}
\label{fig:mindist}
\end{figure}
As shown in Figure~\ref{fig:mindist} the non-faulty robot travels at least distance $S$ on the circle from $p$ to $p'$. $p''$ is the projection of $p'$ onto boundary of $SEC$. Let the arc of distance $S$ on the inner circle makes an angle $\theta$ at the center of inner circle.
Then the arc  makes an angle $\frac{\theta}{2}$ at the center of $SEC$. We know that, $\frac{r}{2}\theta = S$. The semicircular angle is $\frac{\pi}{2}$. So $p'O = r\cos \frac{\theta}{2}$. From Figure~\ref{fig:mindist}, the radial displacement toward the center, i.e. $p'p''$ denoted by $\lambda$. Now $p'p'' = Op'' - Op'$, so

 $$\lambda = r - r\cos \frac{\theta}{2} = r\left(1 - \cos \frac{S}{r}\right) > r\left(1-\left(1-\cfrac{S^2}{2r^2}\right)\right)= \cfrac{S^2}{2r} $$
\end{proof}
\begin{lemma}\label{lem:destgather}
For any configuration Algorithm \ref{algo:gatherk} gathers all the non-faulty robots without creating any additional multiplicity point.
\end{lemma}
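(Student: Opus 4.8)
The plan is to reduce the lemma to a single geometric statement: the family of paths prescribed by \textsc{Gather\_k} in a round (straight segments and the circular arcs produced by \textsc{MoveToDest} via \textsc{FindTangent}) is pairwise non-intersecting except at the common destination $p^*$. Granting this, the lemma follows immediately, since two non-faulty robots can coincide only at a point common to their two paths, and the only such point is $p^*$ (either the pre-existing multiplicity point $m$ or the center $O$ of $SEC$); hence the unique multiplicity that can form is at $p^*$ and no additional one is created. I would first record that in \textit{SSYNC} every robot activated in a round computes the same $p^*$ and the same $SEC$, so it suffices to analyze one fixed $p^*$ and one configuration.

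First I would partition the robots according to the ray emanating from $p^*$ through their position, and dispose of the cross-ray case. For two robots on distinct rays the straight radial segments to $p^*$ meet only at $p^*$, so it remains to confine the arcs. By construction, \textsc{FindTangent} chooses the \emph{minimal} robot-free angular sector $\mu$ adjacent to the robot's ray and uses its bisector (or a line inside it) as the tangent to $p^*$; since that sector is robot-free it contains no competing robot's ray. I would argue that the whole arc stays inside this wedge, so arcs belonging to different rays live in disjoint angular regions around $p^*$ and cannot cross away from $p^*$.

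Next I would treat the robots sharing a single ray by induction on distance from $p^*$, equivalently on the recursion depth of \textsc{FindTangent}. The nearest robot enjoys a robot-free straight path and moves radially; each robot farther out is handled by the recursive call on the robot immediately behind it, which fixes a common tangent line $l$ at $p^*$ inside the selected free sector and builds the circle $G$ through the relevant point and $p^*$. When the minor arc is shorter than $S$ the same tangent $l$ is reused, so consecutive circles are tangent at $p^*$ to the \emph{same} line and are therefore nested, meeting only at $p^*$; otherwise the tangent point is advanced a fixed arc length $S$ and the line $\overline{p''p^*}$ is returned, rotating the tangent direction monotonically. Either way the arcs for one ray should form a non-crossing family lying on one side of the straight radial segment.

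The hard part will be exactly this same-ray case: two distinct circles through $p^*$ that are tangent there to \emph{different} lines necessarily meet at a second point $q$, and I must show that $q$ never lies on the traversed portions of both arcs. The argument has to exploit precisely the $S$-increment and the minimality of the sector in \textsc{FindTangent}, together with Behavior~\ref{behav:mindist}, which caps how far a robot actually advances in one cycle: the plan is to bound the location of $q$ relative to the arc of length $S$ and the radius $r$ of $SEC$, and to show it is pushed beyond any segment of $G$ a robot can occupy, while the minimal-sector choice keeps each arc inside a wedge free of other robots. Making this bound quantitative is the technical heart of the proof; once it is in place the remaining cases reduce to the elementary fact that radial segments to a common point are otherwise disjoint.
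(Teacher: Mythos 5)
Your decomposition (straight radial segments plus tangent-constructed arcs; cross-ray case confined to wedges; same-ray case driven by the recursion in \textsc{FindTangent}) mirrors the paper's, but the step you defer as ``the technical heart'' is the entire content of the lemma, and the statement you plan to prove there is actually false: for two same-ray robots whose circles are tangent at $p^*$ to \emph{different} lines, the second intersection point $q$ \emph{does} lie on both traversed arcs. Concretely, put $O$ at the origin, the common ray along the positive $x$-axis, the free-sector bisector at angle $\theta$, and robots at distances $d_1<d_2$. The inner circle $G_1$ (tangent to the bisector at $O$) is $x^2+y^2-d_1x+d_1\cot\theta\,y=0$, and the outer circle $G_2$ (tangent at $O$ to the line $Op_1'$, of angle $\theta_1<\theta$, where $p_1'$ is the point of $G_1$ at arc-length $S$ from $p_1$) is $x^2+y^2-d_2x+d_2\cot\theta_1\,y=0$. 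Their radical axis is $y=mx$ with $m=(d_2-d_1)/(d_2\cot\theta_1-d_1\cot\theta)$, and $0<m<\tan\theta$, which forces the second intersection $q$ to have both coordinates positive; since the upper arcs are exactly the points of each circle with $y>0$, the point $q$ lies on both robots' paths. So no quantitative bound can push $q$ off the traversed portions, and your plan as stated cannot be completed.

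What rescues the algorithm---and what the paper's proof, however tersely, actually invokes---is timing, not path-disjointness. Because $G_2$ is tangent at $O$ to the chord line $Op_1'$ of $G_1$ and lies entirely on the ray's side of that line, while the part of $G_1$'s arc beyond $p_1'$ lies strictly on the other side, the crossing point $q$ is forced to lie strictly \emph{between} $p_1$ and $p_1'$, i.e.\ within the first arc-length $S$ of the inner robot's path (one checks $m<\tan\theta_1$). By Behavior~\ref{behav:mindist} a non-faulty robot never halts before covering distance $S$, so the inner robot can never sit at $q$ at the end of a cycle, and in \textit{SSYNC} a multiplicity can only arise from end-of-cycle positions; from its next activation the inner robot, now at or past $p_1'$, moves straight along the very line to which $G_2$ is tangent, which $G_2$ meets only at $O$. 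This is precisely why \textsc{FindTangent} advances the tangent point by arc-length exactly $S$---the paper's sentence ``the robot at $p_1$ will move straight in the next cycle, that straight line has been considered as the tangent for the circular path of robot at $p_2$'' encodes this mechanism. Your proposal asks the right question about $q$ but lacks this idea, and its hoped-for conclusion (``$q$ is pushed beyond any segment a robot can occupy'') points the wrong way: $q$ lies \emph{before} the first stopping point of the inner robot and is perfectly reachable by the outer one; what must be argued is that the two robots can never both be \emph{halted} there at an observation instant.
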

\begin{proof}
From Algorithm \ref{algo:gatherk}, notice that all the robots move towards the destination point to gather. If a robot does not have any other robot in its path then it moves in a straight line towards the destination. So the path it follows is unique.

\begin{figure}[H]\centering
\includegraphics[height=0.4\linewidth]{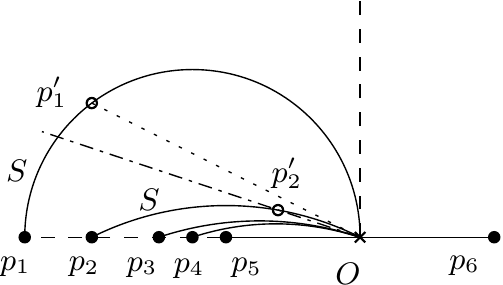}
\caption{Paths of robots intersect only at the destination point}\label{fig:destgather}
\end{figure}
A robot, whose straight line path is blocked by another robot, moves along a circular path connecting its current position and the destination point.
Observe from Figure \ref{fig:destgather} that the circular paths of robot at $p_1$ intersects with each other or the straight path of $p_5$ only at the destination point, because the circle is drawn such that  the angle bisector of $\angle p_5Op_6$ is the tangent to the circle at $O$. The robot at $p_2$ considers the line $p_1'O$ as the tangent where, $p_1'$ is the position of robot at $p_1$ after moving $S$ on the circular path. As the robot at $p_1$ will move straight in the next cycle, that straight line has been considered as the tangent for the circular path of robot at $p_2$.
For robots at $p_3$ and $p_4$ is the tangent is $p_2'O$, because the distance along circular path for $p_3$ is less than $S$, hence that will reach the destination in one cycle if activated.

 Hence it is not possible for the robots to form a multiplicity point at a point other than the destination point.
\end{proof}

\begin{lemma}\label{lem:multgather}
For a configuration with a multiplicity point, Algorithm~\ref{algo:gatherk} gathers at the multiplicity point in finite time.
\end{lemma}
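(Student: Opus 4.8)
The plan is to reduce the statement to three facts: the target point is a single fixed point for all subsequent rounds; every activation of a non-faulty robot either places it on the target or strictly decreases its distance to the target by a round-independent positive amount; and fairness then forces arrival after finitely many activations.

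First I would argue that the multiplicity point $m$ is invariant. Since the configuration contains a multiplicity, Algorithm~\ref{algo:gatherk} sets $p^*=m$ for every activated robot, and by Lemma~\ref{lem:destgather} no additional multiplicity point is ever created, so $m$ remains the unique multiplicity point and hence the common destination in every round. Moreover $m$ is stationary: the robots sitting at $m$ are already at their destination and do not move, faulty robots do not move, and by Behavior~\ref{behav:mindist} a robot within distance $S$ of $m$ lands exactly on $m$ without overshooting. Thus $m$ is a single fixed point towards which all non-faulty robots are heading.

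Next I would bound the per-cycle progress of a single non-faulty robot $r_i$ towards $m$, writing $d_i$ for its current distance to $m$. Along any move permitted by \textsc{MoveToDest} --- either the straight segment $\overline{p_i m}$ or the minor arc of the circle $G$ through $p_i$ and $m$ --- the distance to $m$ is monotonically non-increasing, so $d_i$ never exceeds its initial value $D$, and $D$ is at most the diameter of the initial $SEC$. If $d_i\le S$ (straight case) or the arc-length to $m$ is at most $S$ (circular case), Behavior~\ref{behav:mindist} brings $r_i$ to $m$ in that cycle. Otherwise $r_i$ advances arc-length $S$; a straight move reduces $d_i$ by exactly $S$, while for a circular move the circle $G$ has radius at least $d_i/2$ (since $\overline{p_i m}$ is one of its chords), and among all admissible circles the reduction of the chord-to-$m$ is smallest for the highest curvature, i.e. radius $d_i/2$. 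A computation analogous to Lemma~\ref{lem:mindist} then yields a reduction of at least $d_i\left(1-\cos\frac{S}{d_i}\right)$. Restricted to the relevant range of $d_i$, which is bounded above by $D$ and bounded away from $0$ (small $d_i$ reaches $m$ in one cycle), this is a continuous positive function on a compact interval and hence is bounded below by some round-independent constant $\lambda_m>0$.

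Finally I would close the argument using fairness. Each non-faulty robot is activated infinitely often, and each such activation either places it on $m$ or decreases its distance to $m$ by at least $\lambda_m$; since $d_i\le D$, after at most $\lceil D/\lambda_m\rceil$ activations the robot is within the arrival threshold and reaches $m$. As there are finitely many robots, all non-faulty robots gather at $m$ in finite time. I expect the main obstacle to be the circular case of the progress bound: one must verify that the decrease in Euclidean distance to $m$ per cycle cannot degenerate to $0$, which is precisely why the boundedness of $d_i$ (via its monotone decrease along the path) together with the radius lower bound $d_i/2$ is essential.
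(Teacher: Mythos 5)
Your proof is correct, but it takes a genuinely different route through its middle step. Both proofs begin identically: the multiplicity point $m$ is the common destination and remains invariant because Lemma~\ref{lem:destgather} guarantees no second multiplicity point is ever created. After that, the paper argues combinatorially: robots with a robot-free corridor move straight and cover at least $S$ per activation, while blocked (collinear) robots cease to be collinear after a single circular move (the angular displacements $\alpha_i = S\sin\theta/d_i$ are pairwise distinct), so that from then on every robot has a free straight path and arrives after finitely many activations. You instead establish a uniform per-activation progress bound that applies to every move, straight or circular: the Euclidean distance to $m$ decreases by at least $d_i\left(1-\cos\frac{S}{d_i}\right)$, the worst case being the semicircular path of radius $d_i/2$, and since the relevant $d_i$ lie in a compact interval bounded away from $0$, this yields a round-independent constant $\lambda_m>0$; fairness then gives termination. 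Your approach buys robustness: it does not depend on the paper's delicate claims that collinearity is never re-created and that every robot eventually enjoys a straight corridor, since your bound holds even if a robot takes circular detours in every single activation; the paper's approach buys simplicity once those claims are granted, because progress of at least $S$ along a straight line is immediate. Two details you should make explicit to complete your argument: (i) the arcs produced by \textsc{FindTangent} are minor arcs --- the tangent--chord angle at the destination is at most $\mu/2\le\pi/2$ because $\mu$ is the smaller of the two sectors adjacent to the robot --- and minor-arc-ness is precisely what makes the distance to $m$ monotone along the path and makes the semicircle the worst case (on a major arc the distance to $m$ initially increases); (ii) keeping the bound in the form $d_i\left(1-\cos\frac{S}{d_i}\right)$ and invoking compactness, rather than quoting the explicit $S^2/(2r)$ bound from Lemma~\ref{lem:mindist}, is the right call, since $1-\cos x\le x^2/2$ means that lemma's final inequality actually points the wrong way.
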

\begin{proof}
According to Algorithm~\ref{algo:gatherk}, the destination point is the multiplicity point. The destination point remains invariant over the execution of algorithm, because from Lemma~\ref{fig:destgather}, the robots move in such a way that they only meet at the destination point without creating any other additional multiplicity point. From the Behaviour~\ref{behav:mindist} of non-faulty robots, they move at least a distance $S$ in each cycle they are activated. Hence the robots having a free corridor towards the destination will reach the destination in finite time.

Also notice that, for $k$ robots on a same line in one cycle,  which-ever robot moves they will not be collinear again. Say the robots are at a distance $\{d_1,d_2, \cdots,$ $ d_k\}$ from the center of $SEC$ in the increasing order of distance. After the robot $r_i$ moves $S$ distance along the circular arc, the angle it makes at the center of $SEC$ is half of the angle it makes at the center of circle along which it is moving. Hence the angle can be represented as, $\alpha_i = S \sin \theta/d_i$, where $\theta$ is the half of smallest angle made by a sector in $SEC$ adjacent to $r_i$ without any robots. Now for any two robots on the line, after the movement along the circular path, they will not be co-linear with the center of $SEC$, because the $\alpha_i$ will be different for each of them. Now each robot will have a robot free path towards the center of $SEC$. So after a finite number of activated rounds for that robot it will reach the destination point.
\end{proof}
\begin{theorem}\label{theorem:kfault}
For a legal initial configuration, Algorithm~\ref{algo:gatherk} gathers the non-faulty robots in finite time for $n\geqslant3$.
\end{theorem}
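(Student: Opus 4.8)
The plan is to reduce the whole theorem to the single-multiplicity case already handled by Lemma~\ref{lem:multgather}, and then to show that such a multiplicity point is created (or gathering is directly completed) in finite time. I would split on the initial configuration. If the legal initial configuration already contains a multiplicity point, then Algorithm~\ref{algo:gatherk} fixes the destination to that point and Lemma~\ref{lem:multgather} finishes immediately. The substantive case is a multiplicity-free legal configuration, in which every activated robot targets the center $O$ of the current $SEC$; here the goal is to argue that the non-faulty robots reach a common point in finitely many rounds, whereupon Lemma~\ref{lem:destgather} guarantees no spurious multiplicities are produced along the way.

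First I would establish that the $SEC$ radius is non-increasing across rounds and, by Lemma~\ref{lem:raddec}, strictly decreases whenever the center moves; hence it stays bounded above by the initial radius $r_0$ and converges to some $r^*\ge 0$. Using this bound I would extract a uniform per-activation progress. By Lemma~\ref{lem:mindist}, each activated non-faulty robot farther than $S$ from the center advances radially by at least $\lambda>S^2/(2r_0)$, a fixed positive constant because $r\le r_0$. Since the center may drift between rounds, I invoke Lemma~\ref{lem:del2}: a robot moving $\delta\ge\lambda$ toward the old center still ends up closer to the new center by $\delta-x=\delta^2/\bigl((r+\delta)+\sqrt{(r+\delta)^2-2\delta^2}\bigr)>0$, again bounded below by a positive constant depending only on $r_0$ and $S$. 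Combined with a fair scheduler, this yields a positive constant $c$ by which the nonnegative quantity ``distance of a non-faulty robot to the current center'' drops on every activation that occurs while the robot is more than $S$ from the center.

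The main obstacle is upgrading this progress into genuine finite-time coincidence at a single point, rather than mere asymptotic clustering, since a moving center could in principle let robots keep arriving at slightly shifted locations. I would resolve this as follows. Because each non-faulty robot starts at distance at most $2r_0$ and loses at least $c$ on every ``far'' activation, it can be activated far from the center only finitely often; hence there is a finite round $T$ after which every activated non-faulty robot lies within $S$ of the center and, by Behavior~\ref{behav:mindist}, steps exactly onto it. Now by fairness any non-faulty robot sitting on the $SEC$ boundary (at distance $\approx r^*$) would be activated far from the center infinitely often if $r^*>S$; therefore after $T$ all boundary robots are faulty, so the center and radius are pinned down by fixed points. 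The non-faulty robots then converge onto this single fixed destination, and the first two arrivals form the required multiplicity point.

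Finally I would close by case-splitting on the number of faults, which also clarifies the role of $n\ge 3$ (in contrast to the two-robot impossibility underlying Theorem~\ref{theorem:impossibility}): with $n\ge 3$ total robots the $SEC$ center is a canonical destination agreed upon by all non-faulty robots even without coordinate agreement. With $n-1$ faults a single non-faulty robot is trivially gathered; with fewer faults at least two non-faulty robots survive and meet at the eventually fixed center by the argument above, forming a multiplicity point without any spurious ones thanks to Lemma~\ref{lem:destgather}. From that point Lemma~\ref{lem:multgather} guarantees all non-faulty robots complete gathering in finite time, establishing the theorem for all $n\ge 3$.
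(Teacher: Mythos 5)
Your proposal follows essentially the same route as the paper's proof: the same case split (multiplicity present vs.\ absent), the same use of Lemma~\ref{lem:multgather} and Lemma~\ref{lem:destgather} to finish once a multiplicity point exists, and the same reliance on Lemmas~\ref{lem:raddec}, \ref{lem:del2} and \ref{lem:mindist} to extract a positive, bounded-below per-activation progress toward the center of the $SEC$; your closing fault count ($n-1$ faults, $n\geqslant 3$, one surviving robot trivially gathered) also matches the paper's. In fact, you try to be more precise than the paper exactly where its argument is thinnest, namely in converting per-activation progress into finite-time creation of a multiplicity point.

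However, that added precision contains an inference that does not hold as stated. You argue that since each ``far'' activation decreases a robot's distance to the current center by at least a constant $c>0$, and that distance starts bounded, each robot can have only finitely many far activations. But ``distance to the current center'' is not monotone between activations: while a robot is idle, other robots' moves can shift the $SEC$ center and \emph{increase} that distance, so the telescoping bound fails unless you additionally bound the cumulative center drift (for instance, using the fact that a center shift of magnitude $x$ forces the squared radius to drop by $\Omega(x^2)$, so large shifts can happen only finitely often). Relatedly, your conclusion that ``after $T$ all boundary robots are faulty, so the center is pinned'' is only forced in the case $r^*>S$; the complementary case $r^*\leqslant S$, where activated robots jump exactly onto a possibly still-moving center, needs its own short argument that two robots must eventually coincide (e.g., once a non-$SEC$-determining robot jumps to the center, the center does not move, and the next activation creates a multiplicity). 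To be fair, the paper's own proof is no more rigorous at this very step---it simply asserts that the $SEC$ ``gradually gets smaller'' until the remaining distance is below $S$---so your attempt is at least at the paper's level of rigor, but the finiteness claim as you wrote it is a genuine gap.
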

\begin{proof}
As Algorithm~\ref{algo:gatherk} works in \textit{SSYNC} model, in each cycle the configuration for all the robots are same. For a particular configuration if there is a multiplicity point then all the robot gather at the multiplicity point without creating additional multiplicity point on the way from Lemma~\ref{lem:multgather}.\\
If there is no multiplicity point initially in the configuration $C$, then each robot has to move towards the center of $SEC$. If the center of $SEC$ remains invariant across multiple rounds, then the robots activated during those rounds would form a multiplicity point at the center. For any initial configuration, the center of $SEC$ would always lie inside the convex hull of the configuration. Hence the total distance needed to traverse for a particular robot is finite.
 Hence in finite number of activation cycles there would be at least two robots which will form a multiplicity point. Also from Lemma~\ref{lem:del2}, whenever the $SEC$ is changing by the movement of a robot, it gets closer to the destination point in the next cycle. As the decrement is a positive finite value, inversely dependent on $r$, radius of $SEC$, the $SEC$ would gradually get smaller, until the distance to reach destination is less than $S$, which can be completed in one cycle. Hence Algorithm~\ref{algo:gatherk} gathers all the non-faulty robots in finite time.

 It can tolerate upto $n-1$ faults for $n\geqslant3$, because all the robots are moving in their unique paths. So, even if only one robot is non-faulty that will reach the destination in finite time. It is proved that for $n=2$, gathering is impossible in \textit{SSYNC}, but for $n\geqslant3$ the robots can form a multiplicity point and gather there.
\end{proof}

\section{Our Motivation: Instantaneous Computation}\label{sec:asyncic}

In $ASYNC$ model, there can be arbitrarily large delay between the \textit{look}, \textit{compute} and \textit{move} phases. Thus a robot can move based on an outdated look state. This can cause algorithms to fail. With following example we show one limitation of \textit{ASYNC} model.

Consider the following situation in the \textit{ASYNC} model as shown in Fig.~\ref{fig:asyncic} for three time instances $t_1$, $t_2$ and $t_3$ such that $t_1<t_2<t_3$.
Say an algorithm $\phi$ outputs some robots to keep the $SEC$ fixed given a configuration as the input.
Say at $t_1$, robot $r_1$ is in its look state, while robot $r_3$ is in its compute state.
Then at $t_2$ robot $r_1$ is in its compute step while robot $r_3$ started moving, since the robots fixed by $\phi$ are $r_2$, $r_4$ and $r_5$.
Now at $t_2$, $r_2$ looks and the output of $\phi$ are $r_1$, $r_4$ and $r_5$, since $r_3$ is already moved at $t_2$ changing the configuration.
We can observe that $r_1$ and $r_2$ both are moving because $r_1$ has computed based on configuration at $t_1$ and $r_2$ has computed based on configuration at $t_2$.
Then at $t_3$ both $r_1$ and $r_2$ have moved making the $SEC$ different.
Both $r_1$, $r_2$ are moving towards the center of $SEC$ in the previous configuration while another two robots $r_6$ and $r_7$ are activated and started moving towards the center of $SEC$ in the new configuration. Thus two multiplicity points are created. This situation arises because the robot $r_1$ moved based on the configuration which is outdated.

From Theorem~\ref{theorem:impossibility}, it is clear that gathering is not possible for more than one multiplicity point present. This example shows the downside of execution of algorithm based on a outdated look state. Here when $r_1$ and $r_2$ are computing, $r_3$ has moved, hence the destination they obtained is based on a outdated look data, which leads them to a different destination.
\begin{figure}[H]
\centering
\includegraphics[width=0.85\linewidth]{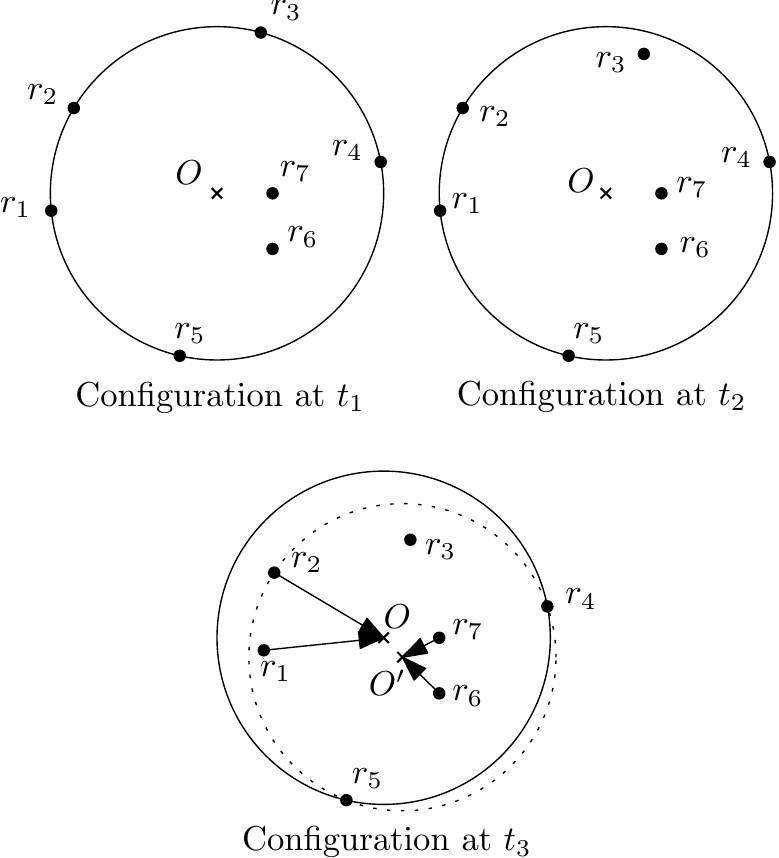}
\caption{Illustration for creation of two multiplicity in \textit{ASYNC}}\label{fig:asyncic}
\end{figure}

Now consider there is no computation delay. Then the robots would always move based on the current look information. Then situations like the previous one can be avoided. So we in this paper introduce the asynchronous model with instantaneous computation (\textit{ASYNC}$_{IC}$). In \textit{ASYNC}$_{IC
}$ the computation phase is instantaneous. This means that there is no delay between look and move.
 Any robot after completion of look stage, immediately
starts moving towards the destination. For example, if two
robots look at time $t$ and $t'$, where $t'=t+\epsilon$ for some small $\epsilon>0$,
the robot looked at time $t$ has already started its movement (unless
the destination computed is itself ) by the time the second robot looks
at $t'$. This model is denoted as \textit{ASYNC}$_{IC}$. The inactivity period
is unpredictable but finite. Since the conflict cannot happen due to compute delays, the \textit{ASYNC}$_{IC}$  model can be
considered more powerful than \textit{ASYNC}  but less powerful than
\textit{SSYNC}.

\section{Gathering ($n,\lfloor n/2\rfloor-2$) Crash Fault in ASYNC$_{IC}$ Model}
\label{sec:asyncgather}
\subsection{Model}
The robots considered in this section have exactly the same capabilities as in section~\ref{subsec:ssyncmodel}, that is they are autonomous, anonymous, homogeneous, oblivious, silent and have weak multiplicity detection. The scheduling model followed here is the \textit{ASYNC}$_{IC}$.
In this section, gathering problem is solved for $\left( n,\lfloor n/2\rfloor -2\right)$ crash fault in the \textit{ASYNC}$_{IC}$ model.
\begin{pdefinition}(($n,\lfloor n/2\rfloor -2$) Crash Fault):
Given $n$ anonymous, homogeneous, oblivious, point robots with unlimited visibility in legal initial configuration with no agreement in coordinate system, having the ability to detect multiplicity points in \textit{ASYNC}$_{IC}$ model. The objective is to achieve gathering for all non-faulty robots in ($n,\lfloor n/2\rfloor -2$) crash fault system for $n\geq 7$.
\end{pdefinition}

\subsection{Algorithm and Correctness}
The Algorithm~\ref{algo:asyncgather} (\textsc{AsyncGather($C,r_i$)}) is executed in such a fashion that the $SEC$ of the initial configuration remains as the $SEC$ of all the configurations to follow until a multiplicity point is formed.
 Initially, let $k$ robots are on the boundary and remaining $n-k$ robots are inside the $SEC$. If $k\leq n/2$, fix all the robots which are on the boundary of the $SEC$.
  So there are at least two non faulty robots which move towards the center of the $SEC$ and create a multiplicity point.
  If $k>n/2$, we divide the $SEC$ in $k$ cells as shown in Figure~\ref{fig:cell} and then use pigeonhole principle to conclude there exists at least empty cell since $n-k<k$. Among $k$ cells, let $k_1$ are empty cells. If $k_1\leq k/2$, we fix robots corresponding to the empty cells. Else if $k_1>k/2$, we fix robots corresponding to non empty cells. In any case we are fixing at most $k/2$ robots. So there are at least two non faulty robots which move towards the center of the $SEC$ and create a multiplicity point. All configurations, where fixing the $SEC$ is possible after making cells, are denoted by $C(Cell)$.
$C(Cell)$ may contain symmetric as well as asymmetric configurations.

\begin{figure}[H]\centering
\includegraphics[height=0.4\linewidth]{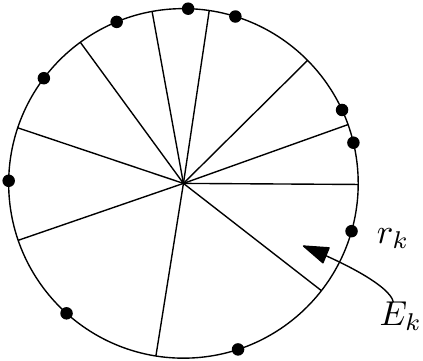}
\caption{$E_k$ is the cell of a robot $r_k$}\label{fig:cell}
\end{figure}
There are configurations for which fixing the $SEC$ is not possible using the above strategy. These configurations have the same number of robots in each cell
since robots may lie on cell boundary and consequently each cell may contain equal number of robots.
For robots on the boundary of multiple cells, those are counted as equally shared between the cells. Say a robots is shared between two cells then it contributes $0.5$ to each cell's number of robots. Say $v_i$ be the number of robots present in each cell. The possible configurations with all the cells having same $v_i$ where $k>n/2$ and no multiplicity point present are following.

\begin{figure}[H]\centering
\includegraphics[width=\linewidth]{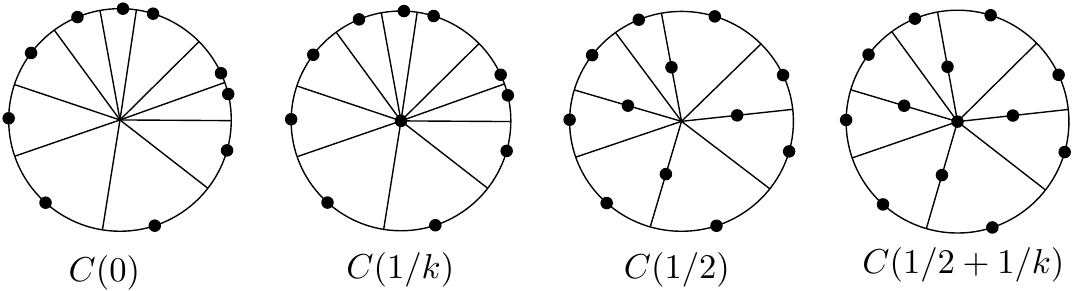}
\caption{Four configurations with the same $v_i$, where $v_i<1$}\label{fig:allsamevi}
\end{figure}

\begin{itemize}
  \item $C(0)$ : $v_i =0 \forall i \in \{1,2,\cdots,n\}$.
  \item $C(1/k)$: $v_i =1/k \forall i \in \{1,2,\cdots,n\}$ i.e., k robots are on the boundary and one robot is at the center.
  \item $C(1/2)$: $v_i =1/2 \forall i \in \{1,2,\cdots,n\}$.
  \item $C(1/2+1/k)$: Combination of $C(1/k)$ and $C(1/2)$ configuration.
\end{itemize}

Illustrations of the above configurations are shown in Figure~\ref{fig:allsamevi}.
These configurations can be symmetric as well as asymmetric.
If any of the $C(0)$, $C(1/k)$, $C(1/2)$ or $C(1/2+1/k)$ configuration is asymmetric then we can elect a leader. Consequently we can fix the $SEC$ according to Algorithm~\ref{algo:FixSEC}.
 We use the leader election algorithm presented in \cite{ChaudhuriM15}, which deterministically elects a leader for any asymmetric configuration. We show that these configurations can only be the initial configuration. If any robot moves according to our algorithm from any of these configuration then it reaches $C(Cell)$. Once the configuration is in $C(Cell)$, it never reaches to any of the four configurations stated above. Henceforth asymmetric configurations are represented by $C^*(\, )$.
 If the initial configuration is symmetric as well as any of the $C(0)$, $C(1/k)$, $C(1/2)$ or $C(1/2+1/k)$, our algorithm cannot gather since leader election is not possible. So these symmetric $C(0)$, $C(1/k)$, $C(1/2)$ or $C(1/2+1/k)$ are not included in the admissible initial configuration.

For a given configuration $C = \{p_1,p_2,$ $\cdots,p_n\}$, the $SEC$ of $C$ can be represented by a set $C^b$ of points which lie on the $SEC$. The $SEC$ of a set of points can be represented with a subset of those points. If all the other robots, which are not part of that subset move inside the $SEC$, the $SEC$ remains the same. The idea behind Algorithm~\ref{algo:FixSEC} (\textsc{FixSEC($F,C$)}) is to choose a subset of robots such that the $SEC$ remains invariant as long as the robots in that subset do not move.
\begin{observation}\label{obs:minrobotSEC}
In the set $C^b$ either there are two points which form a diameter or there are at least three points which do not lie on a semicircle.
\end{observation}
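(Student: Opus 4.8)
The plan is to reduce the statement to the classical characterization of the smallest enclosing circle through its contact points and then invoke Carath\'eodory's theorem in the plane. Throughout let $O$ and $r$ denote the center and radius of $SEC(C)$; since the configuration contains at least two distinct points we have $r>0$, so $O$ itself is not a point of $C^b$.

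First I would establish the key lemma that the center $O$ lies in the convex hull of $C^b$. I would argue by contradiction: if $O\notin\mathrm{conv}(C^b)$, then by the separating hyperplane theorem there is a unit vector $v$ with $\langle p-O,v\rangle>0$ for every $p\in C^b$, so all boundary points lie strictly on one side of the line through $O$ orthogonal to $v$. Shifting the center a small distance $\varepsilon$ in the direction $v$ then strictly decreases the distance from the center to every point of $C^b$; because $C$ is finite and every point not on the boundary is at distance strictly less than $r$ from $O$, for $\varepsilon$ small enough all points remain within a circle of radius $<r$ about the shifted center, contradicting the minimality of $SEC(C)$. Hence $O\in\mathrm{conv}(C^b)$, which in particular forces $|C^b|\ge 2$.

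Next, by Carath\'eodory's theorem in $\mathbb{R}^2$, the point $O$ can be written as a convex combination of at most three points of $C^b$, say $O=\sum_i\lambda_i p_i$ with $\lambda_i\ge 0$ and $\sum_i\lambda_i=1$, and I would split into cases according to how many coefficients are positive. If $O$ is expressible using two points $p,q$ (the third coefficient being zero), then $O$ lies strictly between them on the segment $\overline{pq}$ with $|pO|=|Oq|=r$, so $p$ and $q$ are diametrically opposite and $|pq|=2r$; this is the first alternative. Otherwise $O$ is a combination of three points $p_1,p_2,p_3$ with all $\lambda_i>0$; these cannot be collinear, since a line meets the circle in at most two points, so $O$ lies in the \emph{interior} of the triangle $p_1p_2p_3$.

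To finish, I would invoke the standard equivalence that three distinct points on a circle lie on a common closed semicircle if and only if the center does \emph{not} lie in the interior of their triangle. In the second case $O$ is strictly interior to $\triangle p_1p_2p_3$, so the three points do not lie on any semicircle, which is exactly the second alternative. I expect the main obstacle to be making the shrinking argument for the convex-hull lemma fully rigorous, in particular choosing $\varepsilon$ uniformly small enough to keep the finitely many strictly-interior points enclosed while the radius strictly decreases; the subsequent case analysis is elementary once $O\in\mathrm{conv}(C^b)$ is in hand.
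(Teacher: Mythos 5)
Your proof is correct, but it cannot be said to follow the paper's approach for the simple reason that the paper offers no proof at all: Observation~1 is asserted without justification, treated as a classical fact about smallest enclosing circles, and is used only to guarantee that \textsc{FixSEC} can always select two or three suitable boundary robots. Your argument supplies exactly the missing justification, and the route you choose (center in the convex hull of the contact points, then Carath\'eodory in $\mathbb{R}^2$, then the case split on the number of positive coefficients) is the standard rigorous way to do it: the perturbation argument for $O\in\mathrm{conv}(C^b)$ is sound, the two-coefficient case correctly yields a diameter since $|pO|=|Oq|=r$ and $O$ lies on $\overline{pq}$, and the three-coefficient case correctly yields a point $O$ interior to a nondegenerate inscribed triangle, which is equivalent to the three vertices not lying on a common closed semicircle. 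Two loose ends you should tie off for completeness: (i) the Carath\'eodory representation with a single positive coefficient must be explicitly excluded --- it would force $O\in C^b$, impossible because $r>0$, a fact you state but never invoke for this purpose; and (ii) if the three Carath\'eodory points are not pairwise distinct, the representation collapses to the two-point case, so the ``cannot be collinear'' claim should be prefaced by ``distinct.'' With those sentences added, your write-up is a complete proof of a statement the paper leaves as an exercise, which arguably strengthens the paper, since Lemma~\ref{lem:fixsec} and the bound $n\geq 7$ in Theorem~\ref{theorem:asyncgather} both lean on this observation.
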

According to Observation~\ref{obs:minrobotSEC}, Algorithm~\ref{algo:FixSEC} can always choose two or three robots for fixing the $SEC$.
 The proof of Algorithm~\ref{algo:asyncgather} unfolds in the following lemmas and a theorem. Lemma~\ref{lem:fixsec} proves that we can always choose a set of robots to keep the $SEC$ fixed for any configuration.
Then Lemma~\ref{lem:onemultiplicity} shows that Algorithm~\ref{algo:asyncgather} does not create additional multiplicity points using movement strategy in section~\ref{subsec:ssyncgather}.
Lemma~\ref{lem:finitemult} shows a state diagram where the configuration with a single multiplicity point can be achieved from any configuration. Finally Theorem~\ref{theorem:asyncgather} concludes the proof of correctness.

\begin{algorithm}[!h]
\caption{\textsc{FixSEC($F$,$C$)}}\label{algo:FixSEC}
\textbf{Input:} A set of robots $F$ and configuration $C$\\
\textbf{Output:} A set of robots $C^b_{fix}$.\\
\eIf{all robots in $F$ are on the same semicircle}{
	\eIf{$|F| =1$}{
		Let $F = \{r_i\}$\\
		\eIf{$r_j$ is diametrically opposite to $r_i$}
		{$C^b_{fix} = \{r_i,r_j\}$}
		{		Find the closest robot $r_j$ and $r_k$ diametrically opposite to $r_i$ on either side of diameter passing through $r_i$. \\
		$C^b_{fix} = \{r_i,r_j,r_k\}$}
	}{
	Let $r_i,r_j$ be the farthest two robots in $F$\\
	Find the point of intersection of cord bisector of $p_ip_j$ with boundary diametrically opposite of those two robots.\\
	Find the robots $r_p$ and $r_q$ on the boundary nearest to the intersection point on either side.\\
	$C^b_{fix} = \{r_i,r_j,r_p,r_q\}$
	}
}{
	$C^b_{fix} =F$
}
\end{algorithm}

\begin{lemma}\label{lem:fixsec}
Algorithm~\ref{algo:FixSEC} determines a subset $C^b_{fix}$ of $C^b$ such that SEC of $C^b_{fix}$ is the SEC of $C$.
\end{lemma}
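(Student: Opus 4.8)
The plan is to reduce the set equality $SEC(C^b_{fix}) = SEC(C)$ to a single convex-position condition and then verify that condition branch by branch. Write $O$ and $R$ for the center and radius of $SEC(C)$. Every robot in $C^b$, and hence every point the algorithm can output, lies on the circle $\partial SEC(C)$ at distance exactly $R$ from $O$. For any set $S$ of points lying on this circle, the radius-$R$ circle about $O$ trivially encloses $S$, and it is the \emph{smallest} enclosing circle of $S$ if and only if $O \in \mathrm{conv}(S)$: if $O$ escaped the hull there would be a separating line, and nudging the center across it yields a strictly smaller enclosing circle. Thus the entire lemma is equivalent to the claim $O \in \mathrm{conv}(C^b_{fix})$, and that is what I would establish in each case. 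I would pair this with the companion fact that points on a circle are confined to a common closed semicircle exactly when $O$ fails to lie in the interior of their hull, which is precisely what the algorithm's semicircle test detects.

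The easy branch is when the input $F$ is \emph{not} contained in a semicircle and the algorithm returns $C^b_{fix} = F$: here $O$ lies in the interior of $\mathrm{conv}(F)$ by the semicircle fact, so $SEC(F) = SEC(C)$ at once. The remaining branches treat the case where $F$ lies on a semicircle and therefore cannot pin the circle alone, so the algorithm augments it. When $|F| = 1$, say $F = \{r_i\}$, and some robot $r_j$ is antipodal to $r_i$, the output $\{r_i, r_j\}$ is a diametral pair, $O$ is its midpoint, and $O \in \overline{p_i p_j} \subseteq \mathrm{conv}(C^b_{fix})$; done. Otherwise the algorithm adjoins the two boundary robots $r_j, r_k$ closest to the antipode of $r_i$, one on each side of the diameter through $r_i$; and for $|F| \ge 2$ it takes the farthest pair $r_i, r_j$ in $F$ and adds the boundary robots $r_p, r_q$ nearest the point where the perpendicular bisector of $\overline{p_i p_j}$ meets the far arc, one on each side. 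In both constructions the target is again $O \in \mathrm{conv}(C^b_{fix})$.

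The heart of the argument, and the step I expect to be the main obstacle, is showing that these augmented triples and quadruples genuinely surround $O$, i.e. are not themselves confined to a semicircle. Here I would lean on the minimality encoded in Observation~\ref{obs:minrobotSEC}: because $r_i$ (respectively the pair $r_i, r_j$) sits on the minimal circle with no antipodal mate available in $F$, the boundary set $C^b$ must contain points strictly on both sides of the relevant axis — the diameter through $r_i$, respectively the perpendicular bisector of $\overline{p_i p_j}$ — since otherwise all of $C^b$ would fit in a closed semicircle and $O$ would not lie in $\mathrm{conv}(C^b)$, contradicting that $SEC(C)$ is the smallest enclosing circle. Selecting, on each side of that axis, the boundary robot nearest the antipodal point guarantees that the two chosen points \emph{straddle} the antipode: the arc between them that avoids $r_i$ is then shorter than a semicircle, while each arc joining $r_i$ to one of them is also shorter than a semicircle, so all three (respectively four) points leave $O$ strictly inside their convex hull. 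I would make the straddling precise by placing $r_i$ at the top of the circle and tracking the angular coordinates of the selected points relative to the antipode, then invoke the semicircle fact once more to conclude $O \in \mathrm{conv}(C^b_{fix})$ and hence $SEC(C^b_{fix}) = SEC(C)$. Deriving the existence and the exact straddling of $r_p, r_q$ (and of $r_j, r_k$) cleanly from minimality, rather than by hand-waving, is where the real care is needed.
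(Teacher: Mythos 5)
Your plan is correct, and it is actually more complete than the paper's own proof, though it runs along a different route. The paper's proof does two things: it \emph{asserts}, without examining the branches of Algorithm~\ref{algo:FixSEC}, that the returned set $C^b_{fix}$ never lies on a common semicircle, and it then argues geometrically (via the perpendicular bisector of a chord and Figure~\ref{fig:minsec}) that any three boundary points not on a semicircle have $SEC(C)$ as their smallest enclosing circle. You replace that geometric step by the separation characterization ($SEC(S)=SEC(C)$ for $S\subseteq C^b$ iff the center $O$ lies in $\mathrm{conv}(S)$), and, more importantly, you supply exactly what the paper omits: a branch-by-branch verification that the selected robots surround $O$, derived from the minimality fact $O\in\mathrm{conv}(C^b)$ --- if the robot-free arc around the antipode of $r_i$ (resp.\ around the far intersection point of the bisector of $\overline{p_ip_j}$) exceeded a semicircle, all of $C^b$ would lie in an open semicircle, contradicting $O\in\mathrm{conv}(C^b)$. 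That verification is the real content of the lemma, since the algorithm's augmentation steps are precisely where it could fail, so your route buys rigor where the paper leans on Observation~\ref{obs:minrobotSEC} and the picture; the paper's route buys brevity. Two refinements to make when writing it up: (1) you justify only the direction ``$O\notin\mathrm{conv}(S)$ yields a smaller circle,'' whereas the lemma needs the converse --- for any $c\neq O$ there is $s\in S$ with $\langle s-O,\,O-c\rangle\geq 0$, whence $|s-c|^2\geq R^2+|O-c|^2>R^2$, so no center other than $O$ can serve; (2) keep open and closed semicircles distinct: a diametral pair sits in a closed semicircle yet contains $O$ in its hull, so your inference ``$C^b$ in a closed semicircle $\Rightarrow O\notin\mathrm{conv}(C^b)$'' is valid only because, in the branches where you invoke it, no robot occupies the endpoints of the separating axis (e.g.\ the antipode of $r_i$), and correspondingly the empty arcs need only be \emph{at most} semicircles rather than strictly shorter.
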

\begin{proof}
According to Algorithm~\ref{algo:FixSEC} for different cases we are choosing a subset of three (or four) robots of $C^b$ which do not lie on a semicircle. So to prove the lemma we need to prove that any subset of three robots of $C^b$ which do not lie on a semicircle have the same $SEC$ as $C$.

Now let us consider any three robots on the boundary of $SEC$ of $C$, say $r^b_1$, $r_2^b$ and $r^b_3$. As shown in Figure~\ref{fig:minsec}, $r^b_3$
lies outside the semicircles $r_1^br^b_2A$ and $r_2^br_1^bB$.
Consider $r^b_1$ and $r^b_2$.
\begin{figure}[H]\centering
\includegraphics[height=0.45\linewidth]{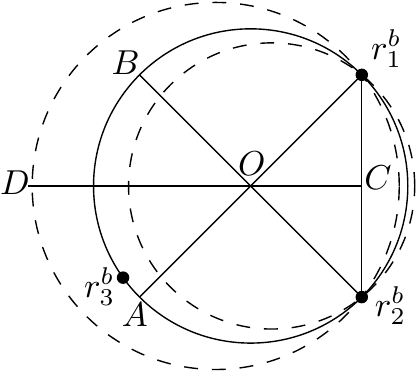}
\caption{$SEC$ of 3 robots not in a semicircle}\label{fig:minsec}
\end{figure}
All the circles that pass through these two robots have their center on the line $CD$, the perpendicular bisector of the line joining $r^b_1$ and $r^b_2$. From Figure~\ref{fig:minsec}, we can see that the smallest circle covering the three robots is the circle passing through them, which is the $SEC$ of $C$. Now the $SEC$ of the new configuration remains the same as the other robots other than the fixed robots move inside, because the $SEC$ of fixed robots is the $SEC$ of whole configuration.
\end{proof}
\begin{algorithm}
\caption{\textsc{AsyncGather($C$,$r_i$)}}\label{algo:asyncgather}
\SetKwInOut{Input}{Input}\SetKwInOut{Output}{Output}
\Input{A configuration $C$ and a robot $r_i$}
\Output{Destination of robot and its path}
\eIf{there is a multiplicity point $m$}
{$p^* = m$\\
$C^b_{fix} =\phi$\\
$distance = full$\\
\eIf{there is a robot free path towards $p^*$}
{$style = straight$}
{$style = circular$}}
{$p^* = O$, where $O$ is the center of $SEC$\\
Let $C^b=\{r_{1}^b,r^b_2,\cdots,r^b_k\}$ be the set of $k$ robots on the boundary of $SEC$.\\
\eIf{$k\leq n/2$}{
	$C^b_{fix} = C^b$\\
	$style = straight$\\
	$distance = full$
}{
Make cells \{$E_1,E_2,\cdots,E_k$\} by joining center and midpoint of arc between two consecutive robots on perimeter.\\
Find the number of robots $v_i$ in each cell $E_i$.\\
\eIf{
	all $v_i$ are same
}{
	 $r_l = $\textsc{ElectLeader($C$)}\\
	$C^b_{fix} =$\textsc{FixSEC($\{r_l\},C$)}
}
{
$V = max(v_i)$\\
$F = \{r^b_i|v_i = V\}$ \\
\eIf{$|F| \leq k/2$}{
		$C^b_{fix}=$ \textsc{FixSEC($F$,$C$)}
	}{
		$C^b_{fix} =$ \textsc{FixSEC($C^b\setminus F$,$C$)}
	}
}
\eIf{$r_i$ is on the angle bisector of its two neighbours on the boundary or there is no robot free path towards $p^*$}
	{$style= circular$}{$style = straight$}
\eIf{movement of $r_i$ to $O$ makes all $v_i$ same without creating multiplicity point}
	{$distance =  half$}{$distance=full$}
}}
\eIf{$r_i \in C^b_{fix}$}{Do not move}
	{\textsc{MoveToDest($p_i,p^*,style,distance$))}}

\end{algorithm}
\vspace{1cm}
\begin{lemma}\label{lem:onemultiplicity}
Algorithm~\ref{algo:asyncgather} gathers the non-faulty robots without creating any additional multiplicity point.
\end{lemma}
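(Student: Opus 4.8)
The plan is to reduce the claim to two facts: (i) the target point $p^*$ that every moving robot is heading for is the same for all of them and stays fixed throughout the phase, and (ii) under this common fixed target the path construction of \textsc{MoveToDest}/\textsc{FindTangent} keeps all paths pairwise disjoint except at $p^*$. Once both hold, two robots can share a common point only at $p^*$, so the only multiplicity that can ever be created is the intended one.

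First I would pin down the target. If the configuration already contains a multiplicity point $m$, the algorithm sets $p^* = m$ and $C^b_{fix}=\phi$; since the robots gathered at $m$ keep $m$ occupied, $m$ persists and serves as the common invariant destination. Otherwise $p^* = O$, the center of $SEC$, and here the invariance is exactly what the fixing rule buys: by Lemma~\ref{lem:fixsec} the chosen subset $C^b_{fix}$ has the same $SEC$ as $C$, and since the algorithm orders every robot of $C^b_{fix}$ not to move, the $SEC$ — and hence its center $O$ — is unchanged by every admissible move. Thus all moving robots share one target that does not drift.

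Next I would reuse the geometric argument behind Lemma~\ref{lem:destgather}, since the movement primitives are identical: a robot with a robot-free corridor to $p^*$ travels the straight segment, while a blocked robot travels the circular arc whose tangent at $p^*$ is the bisector returned by \textsc{FindTangent}, drawn inside a robot-free sector of $SEC$. As argued there, these straight segments and arcs are mutually non-crossing and meet only at $p^*$. The $distance=half$ branch (used only to leave the configurations $C(0),C(1/k),C(1/2),C(1/2+1/k)$) does not affect this, since a proper prefix of a non-crossing path is still non-crossing.

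The step I expect to be the genuine obstacle is the only place where \textit{ASYNC}$_{IC}$ differs from \textit{SSYNC}: robots now look at different instants and may compute their arcs from different intermediate configurations, so I must rule out a collision at a non-destination point caused by a robot routing around stale positions. This is precisely where instantaneous computation is used. By the definition of \textit{ASYNC}$_{IC}$ in Section~\ref{sec:asyncic}, any robot that has completed its look has already committed to and begun its motion before the next robot looks; hence every later call to \textsc{FindTangent} sees the current positions (including any robot already in transit or crashed) and routes its straight or circular path around the robots actually present. Combined with the invariance of $p^*$ and the fact that every arc is laid in a currently robot-free sector aimed at the common target, no two distinct paths can intersect except at $p^*$. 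Therefore Algorithm~\ref{algo:asyncgather} creates no additional multiplicity point.
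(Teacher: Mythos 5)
Your proposal is correct and follows essentially the same route as the paper's proof: the destination $p^*$ is invariant because the robots selected by \textsc{FixSEC} never move (Lemma~\ref{lem:fixsec}), and the straight and circular paths produced by \textsc{MoveToDest} intersect only at $p^*$ by the argument of Lemma~\ref{lem:destgather}. If anything, you are more thorough than the paper, whose proof simply cites Lemma~\ref{lem:destgather} (established in the \textit{SSYNC} setting) without discussing the $distance=half$ moves or the fact that robots in \textit{ASYNC}$_{IC}$ compute their paths from snapshots taken at different instants.
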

\begin{proof}
According to Algorithm~\ref{algo:asyncgather}, the robots move towards the center of $SEC$. The $SEC$ remains invariant until multiplicity point is formed. So the destination point remains invariant. Any robot $r_i$, moving towards the destination, moves according to Algorithm~\ref{algo:movetodest}, either moves in a straight or circular path.
From Lemma~\ref{lem:destgather} the straight and circular paths only intersect at the destination point. So it will not create another multiplicity point other than the destination point.
\end{proof}

\begin{lemma}\label{lem:finitemult}
Algorithm \ref{algo:asyncgather} creates a multiplicity point from an admissible initial configuration without multiplicity point in finite time.
\end{lemma}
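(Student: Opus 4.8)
The plan is to show that the center $O$ of the smallest enclosing circle becomes a fixed target at which at least two non-faulty robots are eventually forced to coincide. First I would establish the invariance of the $SEC$: by Lemma~\ref{lem:fixsec} the set $C^b_{fix}$ returned by Algorithm~\ref{algo:FixSEC} has the same $SEC$ as the whole configuration, so as long as the robots in $C^b_{fix}$ stay put and every other robot moves strictly inward (which Algorithm~\ref{algo:asyncgather} enforces, since non-fixed boundary robots only head toward $O$), the boundary and hence $O$ remain unchanged until a multiplicity appears. The \textit{ASYNC}$_{IC}$ assumption is exactly what makes this robust: because computation is instantaneous, no robot acts on a stale look, so $C^b$, the cells $E_i$, the counts $v_i$, and the fixed set are always derived from the current positions, ruling out the two-multiplicity scenario of Section~\ref{sec:asyncic}. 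This fixes the destination $p^* = O$ throughout the pre-multiplicity phase.

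Next I would organize the reachable non-multiplicity configurations into a state diagram. An admissible initial configuration is either in $C(Cell)$ or is one of the asymmetric special types $C^*(0)$, $C^*(1/k)$, $C^*(1/2)$, $C^*(1/2+1/k)$ (the symmetric versions are excluded by hypothesis). For the special types, Algorithm~\ref{algo:asyncgather} invokes the leader election of \cite{ChaudhuriM15} to fix the $SEC$ deterministically, and the branch that sets $distance = half$ whenever a full move would restore the ``all $v_i$ equal'' condition guarantees that any motion out of a special type lands in $C(Cell)$ rather than in another special type. The crucial structural claim is that $C(Cell)$ is \emph{absorbing}: once the counts $v_i$ are not all equal, no admissible move can make them all equal again before a multiplicity forms, so the execution never returns to a special type. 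Proving this irreversibility is the main obstacle, since it must hold against the asynchronous adversary that selects which robots move and how far; I expect to argue it by verifying that every admissible move (in particular the half-moves) preserves the inequality $\max_i v_i > \min_i v_i$ or else directly produces the multiplicity point.

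With the state diagram in place, a counting argument yields the ``at least two movers'' property. In any $C(Cell)$ configuration, Algorithm~\ref{algo:asyncgather} fixes at most $n/2$ robots, so at least $\lceil n/2 \rceil$ robots are directed toward $O$; since at most $\lfloor n/2\rfloor - 2$ of them are faulty, at least $n/2 - (\lfloor n/2\rfloor - 2) \geq 2$ non-faulty robots are always moving toward the center. By Behavior~\ref{behav:mindist} each such robot advances at least $S$ per activated cycle, and by Lemma~\ref{lem:mindist} its radial progress toward $O$ is at least $\lambda > S^2/(2r)$ even along a circular path; because $r$ is bounded by the fixed $SEC$ radius and the scheduler is fair, each non-faulty mover covers its bounded remaining distance to $O$ in finitely many activations. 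Finally, Lemma~\ref{lem:onemultiplicity} ensures that the straight and circular paths meet only at $O$, so the two or more non-faulty movers coincide exactly at $O$ and nowhere else, producing a single multiplicity point in finite time, which is precisely the assertion of the lemma.
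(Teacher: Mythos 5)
Your overall architecture is the same as the paper's: keep the $SEC$ (and hence the destination $O$) invariant via Lemma~\ref{lem:fixsec}, organize the non-multiplicity configurations into a state diagram with $C(Cell)$ and the four asymmetric special types, handle the special types by leader election, and close with the counting argument (at most $n/2$ fixed, at most $\lfloor n/2\rfloor-2$ faulty, so at least two non-faulty movers) plus the non-intersecting-paths lemma to get a single multiplicity in finite time. Your finite-time step is actually argued somewhat more carefully than the paper's (which just notes the distance to $O$ is at most the radius), since you invoke Behavior~\ref{behav:mindist} and Lemma~\ref{lem:mindist} explicitly.

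However, there is a genuine gap at exactly the point you yourself flag as ``the main obstacle'': the claim that $C(Cell)$ is absorbing, i.e.\ that no execution re-enters a special type before a multiplicity forms. You state you ``expect to argue'' this by checking that admissible moves preserve $\max_i v_i > \min_i v_i$, but you never discharge it, and your sketch mentions only the half-move rule. The paper's proof closes this step with \emph{two} distinct algorithmic mechanisms, and both are needed: (a) the $distance=half$ branch rules out any transition into $C^*(1/k)$, because a robot whose full move to $O$ would equalize all $v_i$ (e.g.\ by becoming the lone interior robot at the center) stops at the midpoint of its path instead; and (b) the rule that a robot lying on the angle bisector of its two boundary neighbours moves along a \emph{circular} path rules out transitions into $C^*(1/2)$, because the cell boundaries are precisely those angle bisectors, and the circular move takes the robot off them — which in turn excludes $C^*(1/2+1/k)$. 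Without mechanism (b), your invariant $\max_i v_i > \min_i v_i$ can in fact be broken by the adversary: interior robots moving straight along cell-boundary lines toward $O$ could equalize the half-counts and recreate a $C(1/2)$-type configuration, where (if it is symmetric) leader election fails and the algorithm has no way to fix the $SEC$. So the irreversibility is not a routine verification; it is enforced by these two specific rules of Algorithm~\ref{algo:asyncgather}, and any complete proof must cite both.
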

\begin{proof}
For a configuration without a multiplicity point, Algorithm~\ref{algo:asyncgather} sets the destination as center of $SEC$ for all the robots. At any point of time, our algorithm fixes some robots such that there are at least two non-faulty robots which can move to the center of $SEC$ to create a multiplicity point.
For the configuration $C(Cell)$, we can fix robots accordingly from the cells with the highest number of robots or its complement with boundary robots, whichever is minimum.
 In Figure~\ref{fig:statediagram}, we can see the transition of configurations. From $C(Cell)$ in finite time we will reach the configuration $C(Mult)$, which contains a multiplicity point.  As the distance to the destination is the radius of $SEC$, which is finite, a multiplicity point is formed in finite time.
\begin{figure}[!h]\centering
\includegraphics[width=\linewidth]{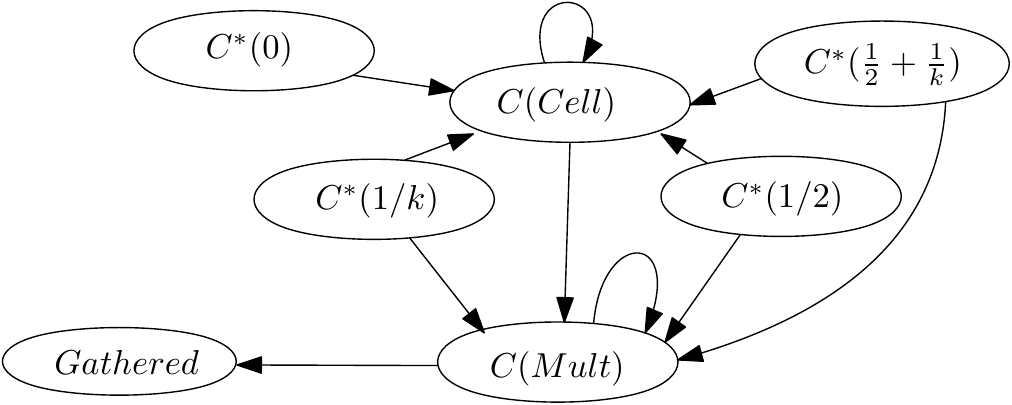}
\caption{Transitions of Configurations }
\label{fig:statediagram}
\end{figure}

$C^*(0)$, $C^*(1/k)$, $C^*(1/2)$ and $C^*(1/2+1/k)$ can only appear in the initial configuration. Then we are fixing the robots in these configuration by electing a leader on the boundary and fixing corresponding robots using Algorithm~\ref{algo:FixSEC}. From any configuration to $C^*(1/k)$ transition is not possible, because the robot would move only half of the distance towards center of $SEC$. Also from any configuration to $C^*(1/2)$ is not possible, as the robot on angle bisector of two neighbors on boundary moves away from the line. So any transition to $C^*(1/2+1/k)$ is also not possible. Similar arguments can be given for transitions not shown in Figure~\ref{fig:statediagram}.
Hence Algorithm~\ref{algo:asyncgather} creates a multiplicity point in finite time from a configuration without multiplicity point.
\end{proof}
\begin{theorem}\label{theorem:asyncgather}
Algorithm~\ref{algo:asyncgather} gathers all non-faulty robots in finite time for $n\geqslant7$.
\end{theorem}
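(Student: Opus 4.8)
The plan is to assemble the theorem from the three preceding lemmas, splitting on whether the current configuration already contains a multiplicity point. If the admissible initial configuration has no multiplicity, I would first invoke Lemma~\ref{lem:finitemult} to obtain a configuration with exactly one multiplicity point $m$ after finitely many activated cycles. From that moment on, Algorithm~\ref{algo:asyncgather} sets $p^* = m$ with $C^b_{fix} = \phi$ for every robot, so all non-faulty robots head to the same point $m$; Lemma~\ref{lem:onemultiplicity} (through Lemma~\ref{lem:destgather}) guarantees their straight and circular paths meet only at $m$, so no second multiplicity is ever created and $m$ stays invariant. It then remains to show finite-time absorption into $m$, which is the \textit{ASYNC}$_{IC}$ analogue of Lemma~\ref{lem:multgather}: by Behavior~\ref{behav:mindist} each activated non-faulty robot advances at least $S$ toward $m$ along its fixed path, collinear robots separate after one move exactly as in Lemma~\ref{lem:multgather}, and fairness activates each non-faulty robot infinitely often, so the finite path length to $m$ is exhausted in finitely many cycles.

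The second ingredient is the fault count, which is where the bound $f \le \lfloor n/2\rfloor - 2$ enters. Before a multiplicity forms, Algorithm~\ref{algo:asyncgather} fixes a set $C^b_{fix}$ that, by Lemma~\ref{lem:fixsec}, pins the $SEC$ while leaving at least $n - k/2$ robots free when $k > n/2$ and exactly $n - k \ge n/2$ robots free when $k \le n/2$; in both branches at least $\lceil n/2\rceil$ robots remain movable, and in the symmetric special configurations $C(0)$, $C(1/k)$, $C(1/2)$, $C(1/2+1/k)$ the leader-based call to Algorithm~\ref{algo:FixSEC} fixes at most three robots, leaving at least $n-3$ free. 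Subtracting the at most $\lfloor n/2\rfloor - 2$ crash faults, I would check that at least two non-faulty movable robots survive in every branch for $n \ge 7$, which is precisely what Lemma~\ref{lem:finitemult} needs to drive at least two robots to the center $O$ and create $m$.

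The delicate point, and the one I expect to be the main obstacle, is justifying that the $SEC$ and the destination $O$ genuinely stay invariant under \textit{ASYNC}$_{IC}$ scheduling while robots are in transit. Because computation is instantaneous, no robot ever acts on stale look data, so the failure mode of the \textit{ASYNC} example in Figure~\ref{fig:asyncic} cannot occur; the robots in $C^b_{fix}$ never move, every movable robot travels strictly inside the current $SEC$, and hence the boundary that determines $SEC(C)$ is preserved. I would still need to argue that two robots looking at times $t$ and $t + \epsilon$ compute the \emph{same} center $O$ and mutually consistent fixed sets even though an intervening robot may have left the boundary; this follows because any robot that moved was not in $C^b_{fix}$, so the circle-defining robots are untouched and $SEC$ is identical for both. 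Establishing this consistency cleanly, and ruling out that an in-transit robot transiently re-balances the cell counts $v_i$ in a way that reassigns a previously fixed robot, is the technically sensitive step; once it is in place, the three lemmas combine to give gathering of all non-faulty robots in finite time for $n \ge 7$.
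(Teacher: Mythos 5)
Your proposal is correct and follows essentially the same route as the paper's proof: it assembles the theorem from Lemma~\ref{lem:finitemult} (a multiplicity point is created in finite time) and Lemma~\ref{lem:onemultiplicity} (no additional multiplicity, hence an invariant destination), then closes with the fault-count arithmetic, and it additionally spells out the finite absorption into $m$ and the \textit{ASYNC}$_{IC}$ consistency of the fixed set, which the paper leaves implicit. One small correction to your counting: when $k > n/2$, Algorithm~\ref{algo:FixSEC} may fix up to $4$ robots even when $|F| < 4$ (its semicircle branch returns four robots), so the movable count is $n - \max(k/2,\,4)$ rather than $n - k/2$; this is exactly where the paper's condition $\lceil n/2\rceil \geq 4$, i.e.\ $n \geq 7$, comes from, and with that adjustment your check of at least two non-faulty movable robots still goes through. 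Also note that the special configurations handled by the leader-based call are the \emph{asymmetric} ones $C^*(0)$, $C^*(1/k)$, $C^*(1/2)$, $C^*(1/2+1/k)$; the symmetric ones are excluded from the admissible initial configurations altogether.
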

\begin{proof}
Lemma~\ref{lem:onemultiplicity} shows that the robots always move towards the destination point, which remains invariant throughout the execution of the algorithm, without creating additional multiplicity on the way.
Lemma~\ref{lem:finitemult} presents a graph showing transition of configurations, which is a directed acyclic graph leading to the single multiplicity point configuration and from there to the gathered configuration.
The algorithm tolerates upto $\lfloor n/2\rfloor -2$ faults, because at most $n/2$ robots are fixed in the case where equal or more robots are inside the $SEC$.
Creation of multiplicity is required to make the destination point invariant, so at least two non-fixed robots must move.
So it is not possible to have all the robots inside to be faulty. Hence the fault tolerance is at $\lfloor n/2\rfloor -2$. Also Algorithm~\ref{algo:FixSEC} fixes at most 4 robots, therefore $\lceil n/2\rceil\geq4$. So $n\geq 7$.
 Thus Algorithm~\ref{algo:asyncgather} gathers all non-faulty robots in finite time.
\end{proof}
\section{Conclusion}\label{sec:conclusion}
In this paper, we have solved the problem of gathering in finite time for $(n,n-1)$ crash fault in \textit{SSYNC} model without any assumption on the coordinate system. This answers the open question in \cite{BramasT15} for fault-tolerant gathering in  \textit{SSYNC} model. In \textit{ASYNC}$_{IC}$ model under similar setting, we have solved the gathering problem with at most $\lfloor n/2\rfloor-2$ faulty robots.

We think the \textit{ASYNC}$_{IC}$ model is a step towards the positive direction. There are many problems which are still unsolved  or proved to be unsolvable in \textit{ASYNC} model can be addressed in the \textit{ASYNC}$_{IC}$ model. Many other problems like pattern formation, flocking, etc., can also be addressed in  \textit{ASYNC}$_{IC}$ model.
\bibliography{bib}
\end{document}